\newtheorem{theorem}{Theorem}
\newtheorem{remark}{Remark}
\newtheorem{Proposition}{Proposition}
\newcommand{\tabincell}[2]{\begin{tabular}{@{}#1@{}}#2\end{tabular}}
\begin{document}

\title{Robust Localization Using Range Measurements with Unknown and Bounded Errors}

\author{
Xiufang~Shi,~\IEEEmembership{Student Member,~IEEE},
Guoqiang Mao,~\IEEEmembership{Senior Member,~IEEE},\\
Brian.D.O. Anderson,~\IEEEmembership{Life Fellow,~IEEE},
Zaiyue~Yang,~\IEEEmembership{Member,~IEEE},\\
and Jiming Chen,~\IEEEmembership{Senior Member,~IEEE}

\thanks{ X. Shi, Z. Yang and J. Chen are with the State Key Laboratory of Industrial Control Technology, Zhejiang University, Hangzhou, China. (e-mail: xfshi.zju@gmail.com, yangzy@zju.edu.cn, jmchen@ieee.org) }
\thanks{ G. Mao is with School of Computing and Communication, University of Technology Sydney, Sydney, NSW 2007, Australia, and Data61-CSIRO, Sydney, NSW 2015, Australia. (e-mail: Guoqiang.Mao@uts.edu.au)}
\thanks{ B.D.O. Anderson is Research School of Engineering, Australian National University, Canberra, ACT 0200, Australia, and Data61-CSIRO, Canberra, ACT 2601, Australia. His work was supported in part by National ICT Australia Ltd., and in part by the Australian Research Council under Grant DP-110100538
and Grant DP-130103610. (e-mail: brian.anderson@anu.edu.au)}
}

\maketitle

\begin{abstract}
Cooperative geolocation has attracted significant research interests in recent years. A large number of localization algorithms rely on the availability of statistical knowledge of measurement errors, which is often difficult to obtain in practice. Compared with the statistical knowledge of measurement errors, it can often be easier to obtain the measurement error bound. This work investigates a localization problem assuming unknown measurement error distribution except for a bound on the error. We first formulate this localization problem as an optimization problem to minimize the worst-case estimation error, which is shown to be a non-convex optimization problem. Then, relaxation is applied to transform it into a convex one. Furthermore, we propose a distributed algorithm to solve the problem, which will converge in a few iterations. Simulation results show that the proposed algorithms are more robust to large measurement errors than existing algorithms in the literature. Geometrical analysis providing additional insights is also provided.
\end{abstract}

\begin{IEEEkeywords}
Cooperative localization, bounded measurement error, worst-case estimation error, Chebyshev center, convex relaxation, semidefinite programming
\end{IEEEkeywords}

\IEEEpeerreviewmaketitle

\section{Introduction}

\IEEEPARstart{W}{ireless} sensor networks (WSNs) play an important role in many applications \cite{cao2008ISAtrans,mao2013road,TWC_zhang2013energy,TIE2014tracking,mao2009graph,ge2015energy}, such as environmental monitoring, target tracking, energy harvesting, etc. Most of these applications are location-dependent, which require knowledge of the measurement locations, and consequently make localization one of the most important technologies in WSNs. In the last decade, wireless localization technologies have undergone significant developments. Existing localization techniques can be divided into many categories depending on the measurement techniques and the localization algorithms being used. Range-based localization, which uses inter-node range measurements for location estimation, is one of the most widely used localization approaches. The range measurements can be estimated from received signal strength (RSS)\cite{mao2007wireless,mao2006ple}, time of arrival(TOA)\cite{TWC_TOAparkclosed}, time difference of arrival (TDOA)\cite{mao2007wireless,SDP_TDOA2009TSP}, and so on \cite{mao2007wireless}. One can also use RSS, TOA, etc. measurements to directly estimate the sensors' positions without first converting these measurements into range measurements\cite{twcrss,TWC_RSS2015distributed}.

For range-based sensor network localization, the localization problem is usually formulated as an optimization problem to determine sensors' positions, such that they are consistent with the inter-node range measurements and known anchors' positions. Various localization algorithms are developed to optimize some given objectives. One of the most widely used algorithms is maximum likelihood estimator (MLE), which maximizes the likelihood function of the unknown sensors' positions \cite{patwari2003TSP,patwari2005SPmagazine}. MLE relies on statistical knowledge of the measurement error. Another widely used algorithm is least squares estimator (LS)\cite{MoeWin2015TVT_LS,TWC_LSzhu2011distributed}, which minimizes the squared error between range measurements and range estimates. LS does not need the knowledge of measurement error distribution. When the measurement error follows a zero-mean Gaussian distribution, MLE and LS become the same. Both MLE and LS are non-Bayesian estimators, there are also some Bayesian estimators, e.g., minimum mean squared error (MMSE) estimator and maximum a posteriori (MAP) estimator, which treat sensors' position vector as a random variable with a priori distribution, but the posterior distribution of the position vector is usually very hard to describe \cite{MoeWin2009cooperative}.

Many position estimators, e.g., MLE, are based on the assumption that measurement error follows a specific distribution and its probability density function is known. In practice, the measurement error may deviate from the assumed distribution and vary according to the measurement environment, measurement technique and measurement device. And we often do not have statistical knowledge of measurement error. The performance of these algorithms becomes vulnerable to an inaccurate statistical knowledge of measurement error. For those algorithms that do not need the statistical knowledge of measurement error, e.g. LS based algorithm, they may perform well when measurement error is small; however, their performance can degrade significantly when the error becomes very large\cite{anderson2010formal}. The error bound is another and less demanding way to describe the property of measurement error. The bounded error assumption has been widely applied in many areas, e.g., set-theoretic estimation in system and control area \cite{schweppe1968recursive,eldar2008minimax,overview1996setMember}, wireless localization \cite{Eurasip2014upper,NLOS_TDOA2016TSP}, etc. Moreover, compared with the statistical distribution of measurement error, it is much easier to obtain the measurement error bound in many situations \cite{belforte1987boundEstimation}. Furthermore, there are already techniques to estimate the measurement error bound with small sets of data,\cite{belforte1987boundEstimation,NLOS_TDOA2016TSP}, e.g., using support vector algorithm to find the smallest sphere that the data live on.

Motivated by the above observations, in this paper, we design a localization algorithm that is robust against large measurement errors and does not need the statistical knowledge of measurement error; instead, only a measurement error bound is required. The main contributions of our work are summarized as follows:
\begin{enumerate}
  \item We first design a centralized robust localization algorithm to minimize the worst-case estimation error, which only uses the measurement error bound. Using geometrical analysis, we show that the algorithm has bounded localization error.
  \item A distributed and iterative localization algorithm is further developed. The convergence of the iterative estimation algorithm is proved. The communication cost and computational complexity of the proposed algorithm are analyzed.
  \item Extensive simulations show that the proposed algorithms are robust against large measurement errors, and the distributed counterpart can converge in a few iterations.
\end{enumerate}

The remainder of this paper is organized as follows. Section \ref{sec:related work} introduces the related work. Section \ref{sec:problem formualtion} gives the problem formulation. Section \ref{sec:minmax_algorithm} presents the proposed centralized localization algorithm. Section \ref{sec:geometrical interpretation} illustrates a geometrical interpretation of our problem and analyzes the error bound of our algorithm. Section \ref{sec:distributed alg} proposes a distributed algorithm. Simulation results are presented in Section \ref{sec:simulations}. Conclusions are drawn in Section \ref{sec:conclusion}.

\emph{Notations}: Throughout the paper, matrices and vectors are denoted by upper boldface and lower boldface letters, respectively; the operator $(\cdot)^T$ denotes the transposition; $\textrm{Tr}(\cdot)$ denotes the matrix trace; $\parallel\cdot\parallel$ stands for the $l_2$ norm; $\mid\mathcal{N}\mid$ denotes the cardinality of $\mathcal{N}$.

\section{Related Work}\label{sec:related work}

Range-based sensor network localization seeks to estimate the unknown sensors' positions that are consistent with the inter-node range measurements and the known anchors' positions\cite{kannan2008robust,biswas2006SDP_TOSN,MLE2014TSP,MDS_central,ESDP,SOCP}. As one of the most widely used localization approaches, range-based localization has attracted substantial research interests.

In practice, the range measurements are usually not error-free. If the statistical knowledge of the measurement error is available \textit{a priori}, MLE is statistically optimal since it maximizes the likelihood function of sensors' positions. However, the optimization problem based on MLE is often non-convex. Many existing methods, e.g., the gradient-based descent method for tackling an MLE problem, require good initialization to reach the global optimum; otherwise, they will fall into a local optimum. One alternative method that deals with the non-convexity of MLE is semidefinite relaxation (SDR), which relaxes the non-convex optimization problem into a convex optimization problem \cite{biswas2006SDP_TOSN,MLE2014TSP}. In \cite{biswas2006SDP_TOSN}, an MLE with SDR is proposed for network localization with Gaussian distributed noise. Simonetto and Leus \cite{MLE2014TSP} derived convex relaxation for MLE under different measurement error models.

If the statistical properties of the measurement error are unknown, the localization problem is usually formulated as an optimization problem minimizing some global cost function. To solve such a problem, a number of available algorithms have been proposed. Some of these algorithms are implemented in a centralized way. The semidefinite programming (SDP) based approach in \cite{biswas2006semidefinite} is one of the most widely used algorithms, and it seeks to minimize the error between the squared range measurements and the squared range estimates. Such an optimization problem is however non-convex. In order to obtain a solution, it is common to transform this problem into an SDP problem through SDR. To improve the computation efficiency of SDP in large networks, edge-based SDP (ESDP) and second-order cone programming (SOCP) relaxation approach were further proposed respectively in \cite{ESDP} and \cite{SOCP}. Another widely used approach is to use a multidimensional scaling (MDS) algorithm \cite{MDS_central}, in which, the localization problem is posed as an LS problem. Subject to use of a good initialization, a gradient-based method \cite{gradientMethods} is also a quick way for sensor position estimation. The above mentioned algorithms also have distributed versions, e.g., distributed SDP \cite{biswas2006SDP_TOSN}, distributed SOCP \cite{distributedSOCP,shi2010SGO}, distributed MDS \cite{MDS,distributedMDS}, and distributed gradient based methods \cite{biswas2006SDP_TOSN,MDS,gradientMethods}.
These algorithms are essentially minimizing the discrepancy between the range measurements and the range estimates. There are also some other distributed algorithms, e.g., \cite{DILOC,diao2014barycentric}. Khan \emph{et al}. \cite{DILOC} proposed a distributed iterative localization algorithm based on the use of barycentric coordinates, which requires all the unknown sensors lie in the convex hull of the anchors. Diao \emph{et al}. \cite{diao2014barycentric} proposed a more general algorithm, which does not require each sensor to lie inside the convex hull of its neighbors and can guarantee global convergence.
 Note though that none of these algorithms directly minimizes the position estimation error. Meanwhile, theoretical analysis establishing the mathematical relationship between the above objective functions and the position estimation error is still lacking. Moreover, although these algorithms typically perform well when the measurement error is small, when the error becomes large, their performance cannot be guaranteed.

Regarding the direct minimization of estimation error, Eldar \emph{et al}.\cite{eldar2008minimax} investigated a minimax estimator, which minimizes the worst-case estimation error, for parameter estimation in a classical linear regression model. In the problem of \cite{eldar2008minimax}, the measurement model is linear with bounded error and the true parameter vector is assumed to lie in the intersection of some known ellipsoids. Furthermore, simulations in \cite{eldar2008minimax} show the advantage of the proposed minimax estimator over the constrained least squares (CLS), which minimizes the data error. Different from the linear regression problem in \cite{eldar2008minimax}, in the localization problem, the range measurements are nonlinear functions of sensors' positions and the feasible set of the positions is not convex, which makes the localization problem more challenging. Inspired by the work in \cite{eldar2008minimax}, we design a position estimator that minimizes the worst-case position estimation error for robust localization.

\section{Problem Formulation}\label{sec:problem formualtion}

 We consider a static network in two-dimensional space, which consists of $n$ sensors, denoted by $\mathcal{V}_x=\{1,\cdots,n\}$, and $m$ anchors, denoted by $\mathcal{V}_a=\{n+1,\cdots,n+m\}$.
The true position of sensor $i$ is $\textbf{x}_i = [{x_i},{y_i}], i\in \mathcal{V}_x$, which is unknown and needs to be estimated. The position of anchor $k$ is known as ${\textbf{a}_k} = {[a_{kx},a_{ky}]}, k\in \mathcal{V}_a$.
Due to the communication limitation, a pair of nodes can acquire the range measurement between them only when they are within a certain sensing range. Let $\mathcal{N}_x$ and $\mathcal{N}_a$ respectively denote the set of sensor-sensor links and the set of sensor-anchor links, from which we can obtain range measurements. All the nodes $\mathcal{V}=\{\mathcal{V}_x,\mathcal{V}_a\}$ and the inter-node links $\mathcal{E}=\{\mathcal{N}_x, \mathcal{N}_a\}$ constitute an undirected graph $G = (\mathcal{V},\mathcal{E})$. We assume this graph is connected. Furthermore, to guarantee that all the sensors can be uniquely localized, we assume this graph is globally rigid, and there exist at least three non-collinear anchors in the area \cite{anderson2010formal,rigidity2004noise_free}. (This requirement is relevant to a two-dimensional ambient space. Four noncoplanar anchors are required for a three-dimensional ambient space.) Henceforth, we restrict attention to the two-dimensional case.

The sensor-sensor range measurement is
\begin{equation}\label{eq:z_ij}
{z_{ij}} = d_{ij} + {\upsilon _{ij}}, ~(i,j)\in \mathcal{N}_x
\end{equation}
where $d_{ij}={\parallel \textbf{x}_i- \textbf{x}_j\parallel}$ is the true distance between sensor $i$ and $j$, and $\left|{{\upsilon _{ij}}}\right| \le \gamma$ is the unknown and bounded measurement error. Correspondingly, the sensor-anchor range measurement is
\begin{equation}\label{eq:z_ik}
{z_{ik}} =  d_{ik}+ {\upsilon _{ik}}, ~(i,k)\in \mathcal{N}_a
\end{equation}
where $d_{ik}={\parallel \textbf{x}_i- \textbf{a}_k\parallel}$ is the true distance between sensor $i$ and anchor $k$, and $ \left|{{\upsilon _{ik}}}\right| \le \gamma$ is the unknown and bounded measurement error. We assume the error bound $\gamma$ is known \textit{a priori} and the measurement errors are independent of each other.

From the constraints on the range measurements, we can say that sensor $i$ lies in the following closed feasible set
\begin{align}
\mathcal{C}_i=\{\textbf{x}_i:\underline{d_{ij}}  \le \parallel {{\textbf{x}_i} - {\textbf{x}_j}} \parallel\leq \overline{d_{ij}} ,~~~\forall (i,j) \in {{\cal N}_x} \label{eq:feasible set(a)}\\
   \underline{d_{ik}} \le \parallel {{\textbf{x}_i} - {\textbf{a}_k}} \parallel \le\overline{d_{ik}} ,~~\forall (i,k) \in {{\cal N}_a}\}\label{eq:feasible set(b)}
\end{align}
where $\underline{d_{ij}}=z_{ij} - \gamma$, $\overline{d_{ij}}=z_{ij} +\gamma$, $\underline{d_{ik}}=z_{ik} - \gamma$ and $\overline{d_{ik}}=z_{ik} +\gamma$. It is possible that $\underline{d_{ij}}$ or $\underline{d_{ik}}$ becomes negative if $\gamma$ is large or very loose. In such case, we set $\underline{d_{ij}}=0$, $\underline{d_{ik}}=0$.

Since the true positions of the sensors are unknown, we cannot minimize the position estimation error directly. Therefore, we \textit{minimize the worst-case estimation error over all the feasible positions}. Since $\textbf{x}_i$ is unknown and fixed, to avoid misunderstanding below, we introduce a new variable ${\textbf{y}_i}$, which denotes an arbitrary point in the feasible set of  $\textbf{x}_i$. Let ${\hat {\textbf{x}}_i}$ denote the position estimate of sensor $i$; the worst-case estimation error would be
%
\begin{subequations}
\begin{align}
&\mathop {\max }\limits_{{\textbf{y}_i}} \sum\limits_{i = 1}^n {{{\parallel {{\textbf{y}_i} - {\hat{\textbf{x}}_i}} \parallel}^2}}\nonumber\\
&s.t.~ ~  \underline{d_{ij}}  \le \parallel {{\textbf{y}_i} - {\textbf{y}_j}} \parallel\leq \overline{d_{ij}} ,~~~\forall (i,j) \in {{\cal N}_x}\label{subeq:dij}\\
&  ~~~~~~ \underline{d_{ik}} \le \parallel {{\textbf{y}_i} - {\textbf{a}_k}} \parallel \le\overline{d_{ik}} ,~~\forall (i,k) \in {{\cal N}_a} \label{subeq:dik}
\end{align}
\end{subequations}
We must choose ${\hat {\textbf{x}}_i}$ to minimize this worst-case error. Therefore, we seek to solve:
\begin{align}\label{eq:minmax_x}
&\mathop {\min }\limits_{{\hat {\textbf{x}}_i}} \mathop {\max }\limits_{{\textbf{y}_i}} \sum\limits_{i = 1}^n {{{\parallel {{\textbf{y}_i} - {\hat {\textbf{x}}_i}} \parallel}^2}} \nonumber\\
&s.t.~ ~ \eqref{subeq:dij}\eqref{subeq:dik}
\end{align}

To facilitate the notation and analysis, we write \eqref{eq:minmax_x} in a compact matrix form. Let $\textbf{x}=[{\textbf{x}_1},{\textbf{x}_2}, \cdots ,{\textbf{x}_n}]^T\in\mathbb{R}^{2n}$ be the true position vector, which is unknown and fixed. We also introduce a new position vector $\textbf{y}=[{\textbf{y}_1},{\textbf{y}_2}, \cdots ,{\textbf{y}_n}]^T\in\mathbb{R}^{2n}$, which denotes an arbitrary possible value of $\textbf{x}$ in the feasible set. Clearly, ${\textbf{x} \in \mathcal{C}}$ below. Let $\hat{\textbf{{x}}}=[{\hat{\textbf{{x}}}_1},{\hat{\textbf{x}}_2}, \cdots ,{\hat{\textbf{x}}_n}]^T\in\mathbb{R}^{2n}$ denote the estimate of $\textbf{x}$. Then our localization problem can be expressed as
\begin{align}\label{eq:minmax_obj}
&\mathop {\min }\limits_{\hat{\textbf{x}}} \mathop {\max }\limits_{\textbf{y} \in \mathcal{C}} \textrm{Tr}({(\textbf{y} - \hat{\textbf{x}})}(\textbf{y} - \hat{\textbf{x}})^T)\\
&\mathcal{C}=\{\textbf{y}: \underline{d_{ij}}^2\leq f_{ij}(\textbf{y})\leq \overline{d_{ij}}^2,~~\forall (i,j)\in \mathcal{N}_x,\nonumber\\
&~~~~~~~~~~\underline{d_{ik}}^2\leq f_{ik}(\textbf{y})\leq \overline{d_{ik}}^2,~~\forall (i,k)\in \mathcal{N}_a
\}
\end{align}
 where
 \begin{small}
  \begin{align*}
  f_{ij}(\textbf{y})&=\textbf{e}_{(2i-1)(2j-1)}^T\textbf{y}\textbf{y}^T\textbf{e}_{(2i-1)(2j-1)}+\textbf{e}_{(2i)(2j)}^T\textbf{y}\textbf{y}^T\textbf{e}_{(2i)(2j)}\\
  f_{ik}(\textbf{y})&=\textbf{a}_k\textbf{a}_k^T-2{a}_{kx}\textbf{y}^T\textbf{e}_{2i-1}-2{a}_{ky}\textbf{y}^T\textbf{e}_{2i}+\textbf{e}_{2i-1}^T\textbf{y}\textbf{y}^T\textbf{e}_{2i-1}+\textbf{e}_{2i}^T\textbf{y}\textbf{y}^T\textbf{e}_{2i}
  \end{align*}
  \end{small}
  where $\textbf{e}_i\in\mathbb{R}^{2n}$ is a column vector with $1$ at the $i$th position and 0 elsewhere; $\textbf{e}_{(i)(j)}\in\mathbb{R}^{2n}$ is a column vector with $1$ at the $i$th position, $-1$ at the $j$th position, and 0 elsewhere. Three noncollinear anchors are needed to resolve translation and rotation ambiguities, but a single anchor is sufficient for establishing the boundedness of set $\mathcal{C}$.

\section{The Relaxed Estimation}\label{sec:minmax_algorithm}
Geometrically, the problem \eqref{eq:minmax_obj} is the formulation for computing the Chebyshev center of set $\mathcal{C}$. The geometrical interpretation and analysis will be given in next section.
Problem (7) is a non-convex optimization problem, for which, the convex optimization techniques cannot be directly used. In this section, we will propose a relaxed estimation algorithm. The main idea of our proposed algorithm is as follows:
\begin{enumerate}
  \item Relax the non-convex optimization problem \eqref{eq:minmax_obj} into a convex optimization problem;
  \item Change the order of optimization, which will further simplify the optimization problem;
  \item Solve the corresponding Lagrangian dual problem of the simplified problem.
\end{enumerate}
In the following, we will introduce each step in detail.

\subsection{Relaxation}
Let $\bm{\Delta}=\textbf{y}\textbf{y}^T$; then \eqref{eq:minmax_obj} can be rewritten as
 \begin{align}\label{eq:minmax_objTransfer}
\mathop {\min }\limits_{\hat{\textbf{x}}} \mathop {\max }\limits_{(\textbf{y},\bm{\Delta}) \in \mathcal{G}} \textrm{Tr}(\bm{\Delta}-2\hat{\textbf{x}}\textbf{y}^T+\hat{\textbf{x}}\hat{\textbf{x}}^T)
\end{align}
where $\mathcal{G}$ is the constraint set:

\begin{align}\label{eq:constraint_G}
&\mathcal{G}=\{(\textbf{y},\bm{\Delta}): \underline{d_{ij}}^2\leq g_{ij}(\bm{\Delta})\leq \overline{d_{ij}}^2,~~\forall (i,j)\in \mathcal{N}_x, \nonumber\\
&~~~~~~~~~~~~~~~~~\underline{d_{ik}}^2\leq g_{ik}(\textbf{y},\bm{\Delta})\leq \overline{d_{ik}}^2,~~\forall (i,k)\in \mathcal{N}_a \nonumber\\
&~~~~~~~~~~~~~~~~~\bm{\Delta}=\textbf{y}\textbf{y}^T \}
\end{align}
and
\begin{small}
\begin{align*}
g_{ij}(\bm{\Delta})&=\textbf{e}_{(2i-1)(2j-1)}^T\bm{\Delta}\textbf{e}_{(2i-1)(2j-1)}+\textbf{e}_{(2i)(2j)}^T\bm{\Delta}\textbf{e}_{(2i)(2j)}\\
g_{ik}(\textbf{y},\bm{\Delta})&=\textbf{a}_k\textbf{a}_k^T-2{a}_{kx}\textbf{y}^T\textbf{e}_{2i-1}-2{a}_{ky}\textbf{y}^T\textbf{e}_{2i}+\textbf{e}_{2i-1}^T\bm{\Delta}\textbf{e}_{2i-1}+\textbf{e}_{2i}^T\bm{\Delta}\textbf{e}_{2i}
\end{align*}
\end{small}
 The equality constraint $\bm{\Delta}=\textbf{y}\textbf{y}^T$ in \eqref{eq:constraint_G} is not affine, which makes $\mathcal{G}$ a non-convex set\cite{boyd2004convex}. The optimization problem cannot be directly solved by convex optimization methods. As commonly done in the field \cite{biswas2006semidefinite,luo2010SDR}, we make the following relaxation:
\begin{equation}\label{eq:relaxation}
\bm{\Delta} \succeq \textbf{y}\textbf{y}^T
\end{equation}
where the notation means that $\bm{\Delta}-\textbf{y}\textbf{y}^T $ is a positive semidefinite matrix. To distinguish the relaxed $\bm{\Delta}$ from the original one, we use $\bm{\Delta}_r$ to denote the relaxed $\bm{\Delta}$. Then the relaxed constraint set, which is now convex, becomes
\begin{subequations}
\begin{align}
&\mathcal{Q}=\{(\textbf{y},\bm{\Delta}_r): \underline{d_{ij}}^2\leq g_{ij}(\bm{\Delta}_r)\leq \overline{d_{ij}}^2,~~\forall (i,j)\in \mathcal{N}_x,\label{eq:g_ij_Constraint}\\
&~~~~~~~~~~~~~~~~~~~\underline{d_{ik}}^2\leq g_{ik}(\textbf{y},\bm{\Delta}_r)\leq \overline{d_{ik}}^2,~~\forall (i,k)\in \mathcal{N}_a \label{eq:g_ik_Constraint}\\
&~~~~~~~~~~~~~~~~~~~\bm{\Delta}_r\succeq\textbf{y}\textbf{y}^T\label{eq:relaxedConstraint}\}
\end{align}
\end{subequations}
Geometrically, \eqref{eq:g_ij_Constraint} and \eqref{eq:g_ik_Constraint} constitute a convex polytope, which is a closed and bounded set. Moreover, the inequality $\bm{\Delta}_r\succeq\textbf{y}\textbf{y}^T$, which is equivalent to $\left[ {\begin{smallmatrix}
\bm{\Delta}_r&{\textbf{y}}\\
{\textbf{y}^T}&1
\end{smallmatrix}} \right]\succeq0$  \cite{SDPtheory}, defines a positive semidefinite cone (closed but unbounded set)\cite{boyd2004convex}. Set $\mathcal{Q}$ is the intersection of a convex polytope and a positive semidefinite cone. Therefore, $\mathcal{Q}$ is a closed and bounded set.

The relaxed problem becomes a convex optimization problem
 \begin{align}\label{eq:minmaxRelax_objTransfer}
\mathop {\min }\limits_{\hat{\textbf{x}}} \mathop {\max }\limits_{(\textbf{y},\bm{\Delta}_r) \in \mathcal{Q}} \textrm{Tr}(\bm{\Delta}_r-2\hat{\textbf{x}}\textbf{y}^T+\hat{\textbf{x}}\hat{\textbf{x}}^T)
\end{align}

\subsection{Change of Optimization Order}
In \eqref{eq:minmaxRelax_objTransfer}, the outer minimization part is an unconstrained optimization problem, which is straightforward, while the inner maximization part is a constrained optimization problem over $(\textbf{y},\bm{\Delta}_r)$. In an effort to simplify the problem, we consider whether we can change the order of these two parts.

In our problem \eqref{eq:minmaxRelax_objTransfer}, the objective function is continuous, finite, and convex in $\hat{\textbf{x}}$. Since the objective function is linear with $\textbf{y}$ and $\bm{\Delta}_r$, it is concave in $(\textbf{y},\bm{\Delta}_r)$. Both the feasible sets of $\hat{\textbf{x}}$ and $(\textbf{y},\bm{\Delta}_r)$ are closed. Moreover, set $\mathcal{Q}$ is bounded. Consequently, according to Corollary 37.3.2 in \cite{rockafellar2015convex}, we can interchange the order of minimization and maximization. The equivalent optimization problem to \eqref{eq:minmaxRelax_objTransfer} becomes
\begin{align}\label{eq:maxmin_obj}
\mathop {\max }\limits_{(\textbf{y},\bm{\Delta}_r) \in \mathcal{Q}} \mathop {\min }\limits_{\hat{\textbf{x}}} \textrm{Tr}(\bm{\Delta}_r-2\hat{\textbf{x}}\textbf{y}^T+\hat{\textbf{x}}\hat{\textbf{x}}^T)
\end{align}

It is straightforward that the optimal solution of the inner minimization problem in \eqref{eq:maxmin_obj} is $\hat{\textbf{x}}(\textbf{y})=\textbf{y}$. Hence, the equivalent optimization problem to \eqref{eq:maxmin_obj} becomes
\begin{align}\label{eq:max_obj}
\mathop {\max }\limits_{(\textbf{y},\bm{\Delta}_r) \in \mathcal{Q}} \textrm{Tr}(\bm{\Delta}_r-\textbf{y}\textbf{y}^T)
\end{align}

In our problem formulation, the problem of minimizing the worst case error is formulated as a min-max problem. That is, first assuming a given location estimate, since any point in the feasible region can be the true
location of the sensor, we find the point in the feasible region that maximizes the difference between the location estimate and the location of that point (which can be the potential true location). This represents the worst case location estimation error. Secondly, we find the location estimate that minimizes such worst case error. Alternatively, the min-max problem can also be formulated as a max-min problem. That is, assuming the ``true" location is fixed (which can be any point in the feasible region), we first find the location estimate that minimizes the difference between the location estimate and the ``true" location, i.e., the minimum location estimation error assuming the ``true" location is fixed. We recognize that when the true location of the sensor is at a different point of the feasible region, the corresponding minimum location estimation error will be different and some points in the feasible region may deliver more accurate location estimates than some other points. This reflects the fact that other things being equal, some geometric points may be more accurately localized than some other points. Secondly, we find the ¡°true¡± location within the feasible region that delivers the worst case minimum location estimation error. It can be shown analytically that the min-max and the max-min problem are equivalent. The maximization problem in \eqref{eq:max_obj} corresponds to the last step: i.e., finding the ``true" sensor location that delivers the worst case minimum location estimation error.

\subsection{Dual Problem}
Since \eqref{eq:max_obj} is a convex optimization problem and strictly feasible, strong duality holds\cite{boyd2004convex}. Problem \eqref{eq:max_obj} can be solved through its dual problem. The Lagrangian dual function of \eqref{eq:max_obj} is
\begin{small}
\begin{align}\label{eq:Lagrangian}
L(\textbf{y},\bm{\Delta}_r,\alpha_{ij},\beta_{ij},\omega_{ik},\varphi_{ik},\bm{\lambda})&=\textrm{Tr}((\textbf{I}_{2n}+\bm{\lambda})(\bm{\Delta}_r-\textbf{y}\textbf{y}^T))+\sum\limits_{(i,j)\in\mathcal{N}_x}{\alpha_{ij}}\left({g_{ij}(\bm{\Delta}_r)-\underline{d_{ij}}^2}\right)
\nonumber\\
&+\sum\limits_{(i,j)\in\mathcal{N}_x}{\beta_{ij}}\left({-g_{ij}(\bm{\Delta}_r)+\overline{d_{ij}}}^2\right)+\sum\limits_{(i,k)\in\mathcal{N}_a}{\omega_{ik}}\left({g_{ik}(\textbf{y},\bm{\Delta}_r)-\underline{d_{ik}}^2}\right)
\nonumber\\
&+\sum\limits_{(i,k)\in\mathcal{N}_a}{\varphi_{ik}}\left({-g_{ik}(\textbf{y},\bm{\Delta}_r)+\overline{d_{ik}}^2}\right)
\end{align}
\end{small}
where $\textbf{I}_{2n}$ denotes a $2n\times2n$ identity matrix, and the dual variables are $\alpha_{ij}\in\mathbb{R}$, $\beta_{ij}\in\mathbb{R}$, $\omega_{ik}\in\mathbb{R}$, $\varphi_{ik}\in\mathbb{R}$ and $\bm{\lambda}\in\mathbb{R}^{2n\times2n}$, which obey the constraints: $\alpha_{ij}\geq0$, $\beta_{ij}\geq0$, $\omega_{ik}\geq0$, $\varphi_{ik}\geq{0}$ and $\bm{\lambda}\succeq\textbf{0}$. The dual problem is
\begin{align}\label{eq:max_Lagrangian}
\mathop {\min }\limits_{({\alpha _{ij}},{\beta _{ij}},{\omega _{ik}},{\varphi _{ik}},{\bm{\lambda}})}\mathop{\sup}\limits_{(\textbf{y},\bm{\Delta}_r)} L(\textbf{y},\bm{\Delta}_r,\alpha_{ij},\beta_{ij},\omega_{ik},\varphi_{ik},\bm{\lambda})
\end{align}
To simplify the notation, let $L$ denote $L(\textbf{y},\bm{\Delta}_r,\alpha_{ij},\beta_{ij},\omega_{ik},\varphi_{ik},\bm{\lambda})$. The inner maximization problem can be solved by letting the derivative of $L$ with respect to $\textbf{y}$ and $\bm{\Delta}_r$ equal to $\textbf{0}$, i.e.,
\begin{align}\label{eq:derivative_L_Y}
\frac{{\partial {L}}}{{\partial \textbf{y}}} = \textbf{0} ~~~~\frac{{\partial {{L}}}}{{\partial \bm{\Delta}_r }}=\textbf{0}
\end{align}
From \eqref{eq:derivative_L_Y}, the optimal value of $\textbf{y}$ satisfies
\begin{align}\label{eq:Y_optimal}
\hat{\textbf{y}}=-(\textbf{I}_{2n}+\bm{\lambda})^{-1}\sum\limits_{(i,k) \in {\mathcal{N}_a}} ({\omega_{ik}}-{\varphi_{ik}}) ({{a}_{kx}\textbf{e}_{2i-1}}+{a}_{ky}\textbf{e}_{2i})
\end{align}
and
\begin{align}\label{eq:I+lamda}
\textbf{I}_{2n}+\bm{\lambda}  =-\sum\limits_{{(i,j)} \in {\mathcal{N}_x}} (\alpha _{ij}-\beta _{ij})\textbf{E}_{ij}- \sum\limits_{(i,k) \in {\mathcal{N}_a}} (\omega _{ik}-{\varphi _{ik}})\textbf{E}_i
\end{align}
where
\begin{align*}
\textbf{E}_{ij}&={\textbf{e}_{(2i-1)(2j-1)}\textbf{e}^T_{(2i-1)(2j-1)}+\textbf{e}_{(2i)(2j)}\textbf{e}^T_{(2i)(2j)}}\\
\textbf{E}_{i}&={{\textbf{e}_{2i-1}}\textbf{e}_{2i-1}^T+{\textbf{e}_{2i}\textbf{e}_{2i}^T}}
\end{align*}
By substituting \eqref{eq:Y_optimal} and \eqref{eq:I+lamda} into \eqref{eq:Lagrangian}, the dual function can be obtained as
\begin{align}\label{eq:dual function}
{g}(\alpha_{ij},\beta_{ij},\omega_{ik},\varphi_{ik})=\textbf{f}^T(\textbf{I}_{2n}+\bm{\lambda})^{-1}\textbf{f}+h({\alpha _{ij}},{\beta _{ij}},{\omega _{ik}},{\varphi _{ik}})
\end{align}
where
$$\textbf{f}= \sum\limits_{(i,k) \in {{\cal N}_a}} ( {\omega _{ik}} - {\varphi _{ik}})(a_{kx}\textbf{e}_{2i - 1} + a_{ky}\textbf{e}_{2i})$$
\begin{small}
\begin{align*}
h({\alpha _{ij}},{\beta _{ij}},{\omega _{ik}},{\varphi _{ik}})=&-\sum\limits_{(i,j)\in\mathcal{N}_x}\alpha_{ij}\underline{d_{ij}}^2+\sum\limits_{(i,j)\in\mathcal{N}_x}\beta_{ij}\overline{d_{ij}}^2\\
&+\sum\limits_{(i,k)\in\mathcal{N}_a}{\omega_{ik}}(\textbf{a}_k\textbf{a}_k^T-\underline{d_{ik}}^2)+\sum\limits_{(i,k)\in\mathcal{N}_a}{\varphi_{ik}}(\overline{d_{ik}}^2-\textbf{a}_k\textbf{a}_k^T)
\end{align*}
\end{small}
The dual optimization problem becomes
\begin{subequations}
\begin{align}\label{eq:dual_Problem}
&\mathop {\min }\limits_{{\alpha _{ij}},{\beta _{ij}},{\omega _{ik}},{\varphi _{ik}},\bm{\lambda}} \textbf{f}^T(\textbf{I}_{2n}+\bm{\lambda})^{-1}\textbf{f}+h({\alpha _{ij}},{\beta _{ij}},{\omega _{ik}},{\varphi _{ik}})\\
&s.t. ~~~{\alpha _{ij}}, {\beta _{ij}}, {\omega _{ik}}, {\varphi _{ik}} \ge 0,~\bm{\lambda}\succeq \textbf{0},\label{subeq:dualVariable_central}\\
 &~~~~~~~~ \forall (i,j) \in {\mathcal{N}_x}, \forall (i,k) \in {\mathcal{N}_a}\label{subeq:ijIndex}
\end{align}
\end{subequations}
Assume that $t$ is a scalar such that $\textbf{f}^T(\textbf{I}_{2n}+\bm{\lambda})^{-1}\textbf{f}\leq t$. From the property of Schur complement, we have
\begin{small}
\begin{align}\label{eq:shcur_Central}
\left[ {\begin{array}{*{20}{c}}
\textbf{I}_{2n}+\bm{\lambda}&{\textbf{f}}\\
{\textbf{f}^T}&t
\end{array}} \right]\succeq0
\end{align}
\end{small}
Problem \eqref{eq:dual_Problem} can be transformed into an SDP problem£º
\begin{align}\label{eq:SDP}
&\mathop {\min }\limits_{{\alpha _{ij}},{\beta _{ij}},{\omega _{ik}},{\varphi _{ik}},\bm{\lambda}} {t} + h({\alpha _{ij}},{\beta _{ij}},{\omega _{ik}},{\varphi _{ik}}) \nonumber\\
&s.t.~~~ \eqref{eq:shcur_Central}\eqref{subeq:dualVariable_central}\eqref{subeq:ijIndex}
\end{align}
The above SDP can be numerically and efficiently solved to any arbitrary accuracy by many existing SDP solvers and toolboxes, e.g., SeDuMi \cite{SeDuMi_Manual}, CVX\cite{grant2008cvx,luo2010SDR}.

Let ${\hat{\alpha }_{ij}},{\hat{\beta} _{ij}},{\hat{\omega }_{ik}},{\hat{\varphi} _{ik}},\hat{\bm{\lambda}}$ denote the solution of \eqref{eq:SDP}; then the estimate of $\textbf{x}$ is
\begin{equation}\label{eq:final_estimate}
{\textbf{x}}_{est}=-(\textbf{I}_{2n}+\hat{\bm{\lambda}})^{-1}\sum\limits_{(i,k) \in {\mathcal{N}_a}} ({\hat{\omega}_{ik}}-{\hat{\varphi}_{ik}}) ({{a}_{kx}\textbf{e}_{2i-1}}+{a}_{ky}\textbf{e}_{2i})
\end{equation}

In the above SDP, there are $2|\mathcal{N}_x|+2|\mathcal{N}_a|+n(2n+1)+1$ scalar variables to be optimized, and the number of scalar equality/inequality constraints is $2|\mathcal{N}_x|+2|\mathcal{N}_a|$, the number of linear matrix inequality (LMI) constraints is $2$, and the size of the LMI is at most $(2n+1)\times (2n+1)$. In \eqref{eq:final_estimate}, the computational complexity of SDP is $\mathcal{{O}}(n^6)$ \cite{boyd2004convex} and the computational complexity of the matrix inverse operation is $\mathcal{{O}}(n^3)$. We can conclude that the computational complexity of our estimation algorithm is $\mathcal{{O}}(n^6)$.

\section{Geometrical Interpretation}\label{sec:geometrical interpretation}
In this section, we will give a geometric interpretation of our problem. Firstly, we will show that the original problem \eqref{eq:minmax_obj} is a standard one, of finding the Chebyshev center of $\mathcal{C}$\cite{eldar2008minimax,wu2013Automatica_ACC,uniqueCenter}. Secondly, in section \ref{sec:minmax_algorithm}, we make a relaxation of the original problem, the solution of which may be thought of as a relaxed Chebyshev center. The detailed interpretations are presented in the following subsections.
\subsection{Chebyshev Center}
 \begin{figure}[!ht]
  \centering
  \subfigure[$\textbf{x}_{Cheby}$ is feasible.]{
   \begin{minipage}[t]{0.3\linewidth}
 \includegraphics[width=\textwidth]{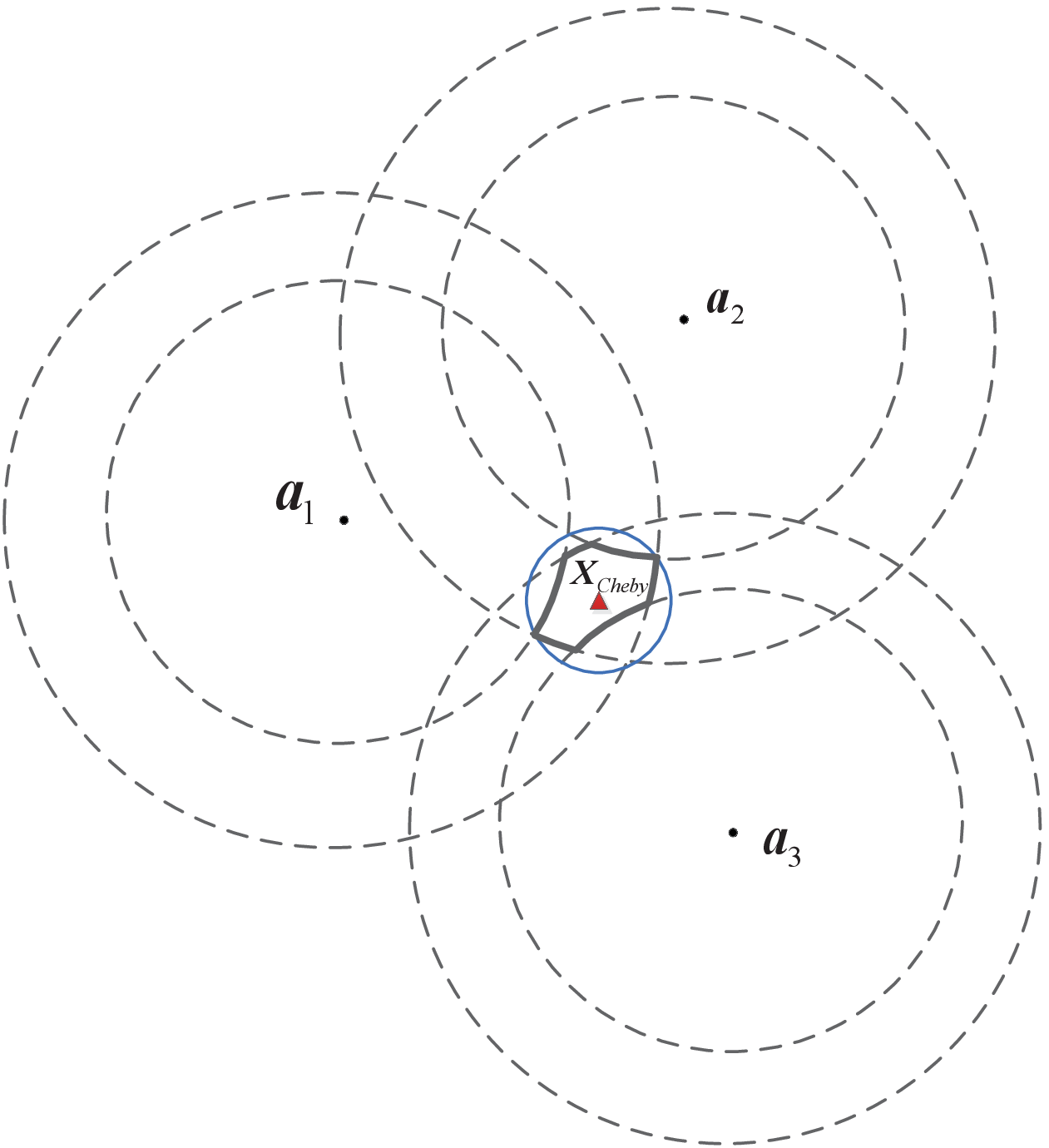}
\label{subfig:ChebyshevCenter_feasible}
\vspace{-15pt}
\end{minipage}}
 \hspace{15pt}
  \subfigure[$\textbf{x}_{Cheby}$ is infeasible.]{
   \begin{minipage}[t]{0.3\linewidth}
 \includegraphics[width=\textwidth]{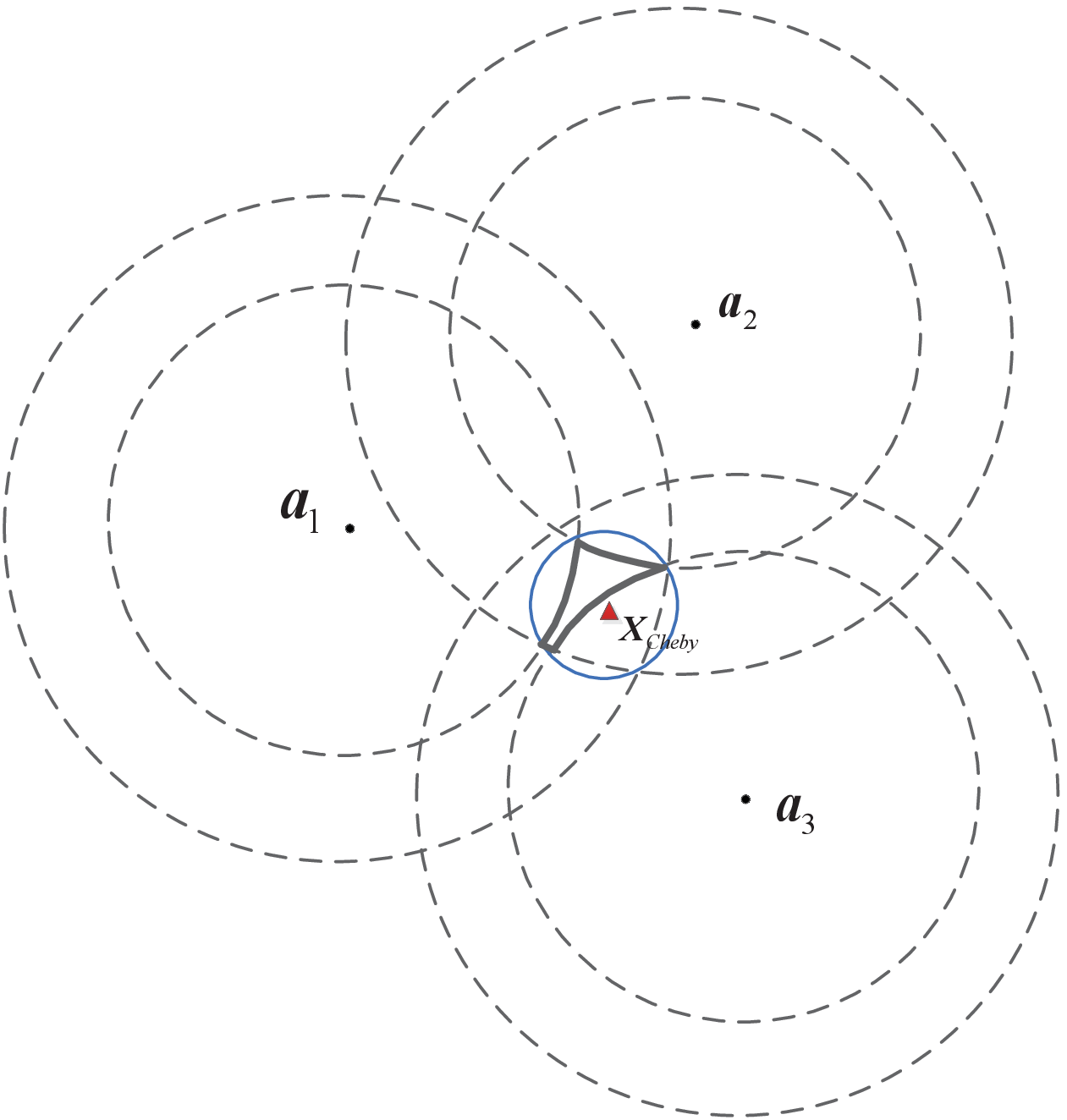}
\label{subfig:ChebyshevCenter_infeasible}
\vspace{-15pt}
\end{minipage}}
\caption{Chebyshev center of a non-convex set, which is the intersection of three feasible sets of a single sensor.}
\label{fig:ChebyshevCenter_nonconvex}
\vspace{-20pt}
 \end{figure}

Geometrically, our objective is to find the Chebyshev center of $\mathcal{C}$, i.e., the center of the minimum ball enclosing $\mathcal{C}$ \cite{eldar2008minimax}. The equivalent problem to \eqref{eq:minmax_obj} is
  \begin{align}\label{eq:equalChebyProblem}
  \mathop {\min }\limits_{\hat{\textbf{x}}} \left\{R_c:{\parallel\textbf{y}-\hat{\textbf{x}}\parallel}^2\leq R_c^2, \forall\textbf{y}\in\mathcal{C} \right\}
  \end{align}
  In our problem, set $\mathcal{C}$, the feasible set of the position vector $\textbf{y}$, is non-convex. This means that the Chebyshev center of $\mathcal{C}$ may not lie in $\mathcal{C}$. We take a single sensor localization problem as an example to illustrate that. As shown in Fig.~\ref{fig:ChebyshevCenter_nonconvex}, $\textbf{a}_1$, $\textbf{a}_2$ and $\textbf{a}_3$ are anchors. From their range measurements with bounded errors, we can determine a non-convex feasible set for the target sensor, which is the region surrounded by the solid bold curve. The minimum circle enclosing the feasible set is the smallest solid circle, and its center (the triangle) is the Chebyshev center $\textbf{x}_{Cheby}$. In Fig.~\ref{subfig:ChebyshevCenter_feasible}, ${\textbf{x}}_{Cheby}$ lies in the feasible set, but in Fig.~\ref{subfig:ChebyshevCenter_infeasible}, it is obvious that ${\textbf{x}}_{Cheby}$ does not lie in the feasible set. Even though we cannot be sure whether ${\textbf{x}}_{Cheby}$ lies in the feasible set, if we take ${\textbf{x}}_{Cheby}$ as the position estimate, we will get the minimum worst-case estimation error.

In practice, sometimes a feasible estimate is preferred over an estimate with the minimum estimation error bound when the latter is outside the feasible set. If the Chebyshev center is infeasible, as shown in Fig.~\ref{subfig:ChebyshevCenter_infeasible}, one general way to proceed is to take the projection of $\textbf{x}_{Cheby}$ onto the feasible set as the estimate. The following proposition compares the estimation error bound obtained with the Chebyshev center and that obtained with its projection onto the feasible set.

\begin{Proposition}\label{proposition2:projection bound}
Suppose $\mathcal{C}\subset\mathbb{R}^{2n}$ is a closed set, let $\textbf{x}_{Cheby}\in\mathbb{R}^{2n}$ be the Chebyshev center of $\mathcal{C}$ and suppose $\textbf{x}_{Cheby}\notin\mathcal{C}$. Let $\textbf{x}_{p}=\mathcal{P}_{\mathcal{C}}(\textbf{x}_{Cheby})=\arg\mathop {\min }\limits_{\textbf{y}\in \mathcal{C}}{\parallel\textbf{x}_{Cheby}-\textbf{y}\parallel}$, where $\mathcal{P}_{\mathcal{C}}(\cdot)$ denotes the projection operator onto set $\mathcal{C}$ \cite{boyd2004convex}. Let $R_c=\mathop {\max }\limits_{\textbf{y}\in \mathcal{C}}{\parallel{\textbf{x}_{Cheby}-\textbf{y}\parallel}}$ and $R_p=\mathop {\max }\limits_{\textbf{y}\in \mathcal{C}}{\parallel\textbf{x}_{p}-\textbf{y}\parallel}$. Then
\begin{align}\label{eq:projectionBoundVSchebyBound}
R_c \leq R_p\leq 2R_c
\end{align}
\end{Proposition}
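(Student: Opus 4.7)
The proposition has two inequalities to establish, and both should fall out of elementary properties of Chebyshev centers and the triangle inequality. My plan is as follows.

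For the lower bound $R_c \le R_p$, I would invoke the defining property of the Chebyshev center directly. Since $\mathbf{x}_{Cheby}$ is a minimizer of $\max_{\mathbf{y}\in\mathcal{C}} \|\hat{\mathbf{x}}-\mathbf{y}\|$ over all candidate centers $\hat{\mathbf{x}}\in\mathbb{R}^{2n}$, the value $R_c$ equals that infimum, and therefore for the particular choice $\hat{\mathbf{x}} = \mathbf{x}_p$ the maximum distance from $\mathbf{x}_p$ to $\mathcal{C}$ cannot be smaller. Hence $R_p \ge R_c$. This is a one-line observation.

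For the upper bound $R_p \le 2R_c$, the plan is a triangle inequality sandwich with two ingredients. First, note that every $\mathbf{y}\in\mathcal{C}$ satisfies $\|\mathbf{x}_{Cheby}-\mathbf{y}\|\le R_c$ by the very definition of $R_c$ as the enclosing ball radius. Second, because $\mathbf{x}_p = \mathcal{P}_{\mathcal{C}}(\mathbf{x}_{Cheby})$ is the closest point of $\mathcal{C}$ to $\mathbf{x}_{Cheby}$ (closedness of $\mathcal{C}$ ensures the projection is attained), its distance to $\mathbf{x}_{Cheby}$ is bounded above by the distance from $\mathbf{x}_{Cheby}$ to any chosen point of $\mathcal{C}$, and in particular it is at most $R_c$. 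Then for an arbitrary $\mathbf{y}\in\mathcal{C}$,
\begin{equation*}
\|\mathbf{x}_p-\mathbf{y}\| \le \|\mathbf{x}_p-\mathbf{x}_{Cheby}\| + \|\mathbf{x}_{Cheby}-\mathbf{y}\| \le R_c + R_c = 2R_c,
\end{equation*}
and taking the supremum over $\mathbf{y}\in\mathcal{C}$ yields $R_p \le 2R_c$.

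The argument is essentially mechanical, so I do not expect any real obstacle. The only mild subtlety is justifying that the projection distance $\|\mathbf{x}_p-\mathbf{x}_{Cheby}\|$ is bounded by $R_c$; this needs $\mathcal{C}$ to be nonempty (so the minimum defining $\mathbf{x}_p$ is attained) and closed (so the projection exists), both of which are granted in the hypothesis. For completeness I might also briefly remark on tightness: the lower bound is attained when $\mathbf{x}_{Cheby}\in\mathcal{C}$ already (so $\mathbf{x}_p=\mathbf{x}_{Cheby}$ and $R_p=R_c$), and the upper bound is approached when $\mathcal{C}$ degenerates so that the farthest point from $\mathbf{x}_p$ lies on the ray through $\mathbf{x}_{Cheby}$ opposite to $\mathbf{x}_p$ at distance roughly $R_c$ from $\mathbf{x}_{Cheby}$, confirming that the factor $2$ cannot be improved in general.
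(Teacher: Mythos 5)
Your proposal is correct and follows essentially the same route as the paper: the lower bound from the minimizing property of the Chebyshev center, and the upper bound from the triangle inequality combined with the observation that $\parallel\textbf{x}_{p}-\textbf{x}_{Cheby}\parallel\leq R_c$ (the paper deduces this from $\textbf{x}_{p}\in\mathcal{C}$, you from the minimality of the projection --- an immaterial difference). Your added remark on tightness of both bounds is a nice touch not present in the paper, but the core argument is identical.
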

\begin{proof}
From the definition of Chebyshev center, we can easily obtain
\begin{align}
R_c^2=\mathop{\max}\limits_{\textbf{y}\in\mathcal{C}}{\parallel{\textbf{x}_{Cheby}-\textbf{y}\parallel}^2}=\mathop{\min}\limits_{\hat{\textbf{x}}}\mathop {\max}\limits_{\textbf{y}\in\mathcal{C}}{\parallel{\hat{\textbf{x}}-\textbf{y}\parallel}^2}
\leq \mathop{\max}\limits_{\textbf{y}\in\mathcal{C}}{\parallel{\textbf{x}_{p}-\textbf{y}\parallel}^2}=R_p^2
\end{align}
Apparently, $R_c \leq R_p$. We also note that
\begin{align}
\mathop{\max}\limits_{\textbf{y}\in\mathcal{C}}{\parallel{\textbf{x}_{p}-\textbf{y}\parallel}}
&=\mathop{\max}\limits_{\textbf{y}\in\mathcal{C}}{\parallel{\textbf{x}_{p}-\textbf{x}_{Cheby}+\textbf{x}_{Cheby}-\textbf{y}\parallel}}\nonumber\\
&\leq\mathop{\max}\limits_{\textbf{y}\in\mathcal{C}}(\parallel{\textbf{x}_{p}-\textbf{x}_{Cheby}\parallel+\parallel\textbf{x}_{Cheby}-\textbf{y}\parallel})\nonumber\\
&\leq \parallel{\textbf{x}_{p}-\textbf{x}_{Cheby}\parallel+\mathop{\max}\limits_{\textbf{y}\in\mathcal{C}}\parallel\textbf{x}_{Cheby}-\textbf{y}\parallel}
\end{align}
Since $\textbf{x}_{p}\in\mathcal{C}$, then $\parallel\textbf{x}_{p}-\textbf{x}_{Cheby}\parallel\leq \mathop{\max}\limits_{\textbf{y}\in\mathcal{C}}{\parallel{\textbf{x}_{Cheby}-\textbf{y}\parallel}}$. Thus we have
\begin{align}
\mathop{\max}\limits_{\textbf{y}\in\mathcal{C}}{\parallel{\textbf{x}_{p}-\textbf{y}\parallel}}\leq 2\mathop{\max}\limits_{\textbf{y}\in\mathcal{C}}\parallel\textbf{x}_{Cheby}-\textbf{y}\parallel
\end{align}
Equivalently, $R_p \leq 2R_c$. The proof is complete.
\end{proof}

From Proposition \ref{proposition2:projection bound}, we can see that $\textbf{x}_p$ is feasible, but the upper bound of the estimation error of $\textbf{x}_p$ is larger than that of $\textbf{x}_{Cheby}$. In practice, we cannot just say one estimate is definitely better than another; such a statement depends on the metric used to evaluate localization performance. If we need a feasible estimate with a small estimation error bound, $\textbf{x}_p$ would be a good choice. Nevertheless, if we need an estimate with the minimum estimation error bound, $\textbf{x}_{Cheby}$ is obviously a better choice. In this paper, we only take the estimation error bound as the metric evaluating localization performance, which makes $\textbf{x}_{Cheby}$ the best choice.

\subsection{Relaxed Chebyshev Center}
Finding the Chebyshev center is an NP-hard problem except in some special cases \cite{wu2013Automatica_ACC}. In our problem, because the constraint set is not convex, it is difficult to obtain $\textbf{x}_{Cheby}$. In \eqref{eq:final_estimate}, ${\textbf{x}}_{est}$ is a relaxed estimate of the Chebyshev center. It is straightforward that the estimation error bound of ${\textbf{x}}_{est}$ is no less than that for $\textbf{x}_{Cheby}$,
 \begin{align}\label{eq:Xest VS Xcheby}
 \mathop{\max}\limits_{\textbf{y}\in\mathcal{C}}{\parallel{\textbf{x}_{Cheby}-\textbf{y}\parallel}^2}=\mathop{\min}\limits_{\hat{\textbf{x}}}\mathop {\max}\limits_{\textbf{y}\in\mathcal{C}}{\parallel{\hat{\textbf{x}}-\textbf{y}\parallel}^2}
\leq \mathop{\max}\limits_{\textbf{y}\in\mathcal{C}}{\parallel{\textbf{x}_{est}-\textbf{y}\parallel}^2}
 \end{align}

 Even though we cannot be sure whether ${\textbf{x}}_{est}$ lies in the feasible set $\mathcal{C}$, when we minimize the worst-case estimation error over a relaxed constraint set, the estimation error satisfies
\begin{align}\label{eq:error_bound}
\parallel{\textbf{x}}_{est}-\textbf{x}\parallel^2
&\leq \mathop {\max }\limits_{\textbf{y}\in \mathcal{C}}{\parallel{\textbf{x}}_{est}-\textbf{y}\parallel}^2\nonumber\\
&=\mathop {\max }\limits_{(\textbf{y},\bm{\Delta}) \in \mathcal{G}} \textrm{Tr}(\bm{\Delta}-2{\textbf{x}}_{est}\textbf{y}^T+{\textbf{x}}_{est}{\textbf{x}}_{est}^T)\nonumber\\
&\leq\mathop {\max }\limits_{(\textbf{y},\bm{\Delta}_r) \in \mathcal{Q}} \textrm{Tr}(\bm{\Delta}_r-2{\textbf{x}}_{est}\textbf{y}^T+{\textbf{x}}_{est}{\textbf{x}}_{est}^T)\nonumber\\
&=\mathop {\min }\limits_{\hat{\textbf{x}}} \mathop {\max }\limits_{(\textbf{y},\bm{\Delta}_r) \in \mathcal{Q}} \textrm{Tr}(\bm{\Delta}_r-2\hat{\textbf{x}}\textbf{y}^T+\hat{\textbf{x}}\hat{\textbf{x}}^T)
\end{align}
Then we can say the estimation error is upper bounded by the optimal value of \eqref{eq:minmaxRelax_objTransfer}.

In Euclidean space, every closed convex bounded set has a unique Chebyshev centre\cite{uniqueCenter}. However, our original constraint set $\mathcal{C}$ is not convex, and the Chebyshev center $\textbf{x}_{Cheby}$ may not be unique. Through relaxation, the relaxed set $\mathcal{Q}$ becomes a closed convex bounded set. Therefore, we can obtain a unique Chebyshev center $\textbf{x}_{est}$ of the relaxed set $\mathcal{Q}$.

\section{Distributed Estimation} \label{sec:distributed alg}
In practice, the network scale is often very large, including hundreds, even thousands of sensors and sometimes only a small number of anchors. If the network is localized in a centralized way, it may result in an extremely high communication burden and computational complexity at the central processor. Therefore, in this section, we consider an implementation of the above relaxed estimation method in a distributed way. In the distributed algorithm, each sensor will first make an initial guess about its position, which is denoted by $\hat{\textbf{x}}_i(0), i=1,2,\cdots,n$. Then, each sensor takes its neighbor nodes as `anchors', and iteratively estimates its position.
\subsection{Initial Estimation}
\subsubsection{Sensor-Anchor Distance Estimation}
Since the sensing range of each sensor is limited, not all the sensors have direct connections with anchors. To make an initial estimation on each sensor's position, we first estimate the Euclidean distance between each sensor and anchor through information exchange between neighboring nodes.

Similarly to the DV-hop localization algorithm \cite{DVhop2003}, each anchor broadcasts its position with a hop-counter initialized to one through the network. Let $n_{ik}$ denote the hop-count value from anchor $k$ to sensor $i$. Sensor $i$ would compute the lower and the upper bounds of $\parallel {{\textbf{x}_i} - {\textbf{a}_k}} \parallel$, and broadcast these bounds along with ${\textbf{a}_k}$ and $n_{ik}$ to its neighbors. The procedure of computing the lower and the upper bounds is explained in next paragraph. Each receiving sensor will select the information along the shortest path, measured by the number of hops, to compute the bound of distance between this sensor and anchor $k$.

If $(i,k)\in\mathcal{N}_a$, i.e., sensor $i$ has a direct connection with anchor $k$, i.e., $n_{ik}=1$, the true distance between $i$ and anchor $k$ satisfies \eqref{eq:feasible set(b)}. If $(i,k)\notin\mathcal{N}_a$, i.e., sensor $i$ does not have a direct connection with anchor $k$, sensor $i$ will estimate $\underline{d_{ik}}$ and $\overline{d_{ik}}$ through the information received from its neighbors $\mathcal{N}_i$. For  $j\in\mathcal{N}_i$, if $j=\arg\mathop {\min}\limits_{l\in\mathcal{N}_i}n_{lk}$, sensor $i$ would use the information from sensor $j$ to estimate $\underline{d_{ik}}$ and $\overline{d_{ik}}$. The true distance between $i$ and $j$ satisfies \eqref{eq:feasible set(a)}.
Obviously,
\begin{small}
\begin{align*}
\parallel {{\textbf{x}_i} - {\textbf{x}_j}} \parallel+\parallel {{\textbf{x}_j} - {\textbf{a}_k}} \parallel\leq\overline{d_{ij}}+\overline{d_{jk}}\\
\parallel {{\textbf{x}_i} - {\textbf{x}_j}} \parallel-\parallel {{\textbf{x}_j} - {\textbf{a}_k}} \parallel\geq\underline{d_{ij}}-\overline{d_{jk}}\\
\parallel {{\textbf{x}_j} - {\textbf{a}_k}}\parallel-\parallel {{\textbf{x}_i} - {\textbf{x}_j}} \parallel\geq\underline{d_{jk}}-\overline{d_{ij}}
\end{align*}
\end{small}
Since $
|\parallel {{\textbf{x}_i} - {\textbf{x}_j}} \parallel-\parallel {{\textbf{x}_j} - {\textbf{a}_k}} \parallel|\leq\parallel {{{\textbf{x}_i} - {\textbf{x}_j}}+{\textbf{x}_j} - {\textbf{a}_k}} \parallel\leq\parallel {{\textbf{x}_i} - {\textbf{x}_j}} \parallel+\parallel {{\textbf{x}_j} - {\textbf{a}_k}} \parallel
$, we have $\underline{d_{ik}}=\max\{\underline{d_{ij}}-\overline{d_{jk}},\underline{d_{jk}}-\overline{d_{ij}},0\} $, $\overline{d_{ik}}=\overline{d_{ij}}+\overline{d_{jk}}$ and $n_{ik}=n_{jk}+1$.

By recursion, it can be shown that, the distance bounds between sensor $i$ and anchor $k$ are
\begin{align}\label{eq:dv_lowerbound}
\underline {{d_{ik}}}  = \left\{ {\begin{array}{*{20}{c}}
{\max \{{z_{ik}} - \gamma, 0\},~~~~~~~~~~~~~~~(i,k) \in {\mathcal{N}_a}}\\
{\max \{ \underline {{d_{ij}}}  - \overline {{d_{jk}}} ,\underline {{d_{jk}}} - \overline {{d_{ij}}} ,0\} ,~~(i,k) \notin {\mathcal{N}_a}}
\end{array}} \right.
\end{align}
\begin{align}\label{eq:dv_upperbound}
\overline{d_{ik}} = \left\{ \begin{array}{*{20}{c}}
{{z_{ik}}+ \gamma,~~~~~~~~ (i,k) \in {\mathcal{N}_a}}\\
{\overline{d_{ij}}+\overline{d_{jk}}, ~~~~~~(i,k) \notin {\mathcal{N}_a},}
\end{array} \right.
\end{align}
where $j = \arg \mathop {\min }\limits_{l \in {{\cal N}_i}} {n_{lk}}$.

\begin{remark}
If the network is very large, the initial estimates of sensor-anchor distance bounds are obtained through an $n_k$-hop path, where $n_k$ might be very large. Consequently, the estimation performance will be degraded. In this case, some approximation strategies can be applied to reduce $n_k$ by dividing such large-scale network into smaller subnetworks. Many existing techniques are available for network division, e.g., the approach applied in \cite{biswas2006SDP_TOSN}, clustering in \cite{ICASSP2015}, clique extraction in \cite{SPCOM16}, etc.
\end{remark}

\subsubsection{Initial Position Estimation}\label{subsec:initial estimate}
After we have obtained the distance bounds between each sensor-anchor pair, the initial estimate of sensor $i$'s position is obtained by solving the following optimization problem
\begin{align}\label{eq:singleNode_intial}
&\mathop {\min }\limits_{{\hat {\textbf{x}}_i}} \mathop {\max }\limits_{{\textbf{y}_i}} {{{\parallel {{\textbf{y}_i} - {\hat {\textbf{x}}_i}} \parallel}^2}} \nonumber\\
&s.t. ~~~~~~ \underline{d_{ik}}^2 \le \parallel {{\textbf{y}_i} - {\textbf{a}_k}}\parallel^2 \le\overline{d_{ik}}^2 ,~~\forall k \in {{\cal V}_a}
\end{align}
Let $\Delta_i=\textbf{y}_i\textbf{y}_i^T$; an equivalent problem to \eqref{eq:singleNode_intial} is
\begin{align}\label{eq:singleNode_original2}
&\mathop {\min }\limits_{{\hat {\textbf{x}}_i}} \mathop {\max }\limits_{({\textbf{y}_i},\Delta_i)\in\mathcal{G}_i} \{\Delta_i-2\textbf{y}_i\hat{\textbf{x}}_i^T+\hat{\textbf{x}}_i\hat{\textbf{x}}_i^T\} \nonumber\\
&\mathcal{G}_i=\{({\textbf{y}_i},\Delta_i): \underline{d_{ik}}^2 \le \Delta_i-2\textbf{y}_i{\textbf{a}_k}^T+{\textbf{a}_k}{\textbf{a}_k^T} \le\overline{d_{ik}}^2 ,~~\forall k \in {{\cal V}_a} \nonumber\\
&~~~~~~~~~~~~~~~~~~~~\Delta_i=\textbf{y}_i\textbf{y}_i^T\}
\end{align}
The constraint set $\mathcal{G}_i$ is non-convex, as before, we relax $\Delta_i=\textbf{y}_i\textbf{y}_i^T$ into $\Delta_i\geq\textbf{y}_i\textbf{y}_i^T$. Let $\Delta_{ir}$ denote the relaxed $\Delta_i$; then problem \eqref{eq:singleNode_original2} becomes a convex optimization problem
\begin{align}\label{eq:singleNode_relaxed}
&\mathop {\min }\limits_{{\hat {\textbf{x}}_i}} \mathop {\max }\limits_{({\textbf{y}_i},\Delta_{ir})\in\mathcal{Q}_i} \{\Delta_{ir}-2\textbf{y}_i\hat{\textbf{x}}_i^T+\hat{\textbf{x}}_i\hat{\textbf{x}}_i^T\} \nonumber\\
&\mathcal{Q}_i=\{({\textbf{y}_i},\Delta_{ir}): \underline{d_{ik}}^2 \le \Delta_{ir}-2\textbf{y}_i{\textbf{a}_k}^T+{\textbf{a}_k}{\textbf{a}_k^T} \le\overline{d_{ik}}^2 ,~~\forall k \in {{\cal V}_a} \nonumber\\
&~~~~~~~~~~~~~~~~~~~\Delta_{ir}\geq\textbf{y}_i\textbf{y}_i^T\}
\end{align}

Similarly to the centralized estimation, we change the order of optimization and the solution of the minimization part is $\hat{\textbf{x}}_i=\textbf{y}_i$. Problem \eqref{eq:singleNode_relaxed} is simplified as
\begin{align}\label{eq:singleNode_replace}
 \mathop {\max }\limits_{({\textbf{y}_i},\Delta_{ir})\in\mathcal{Q}_i} \{\Delta_{ir}-\textbf{y}_i\textbf{y}_i^T\}
\end{align}
Its Lagrangian dual problem is also an SDP problem:
\begin{align}\label{eq:SDP_initial}
&\mathop {\min }\limits_{{\omega _k},{\varphi _k}} t + \sum\limits_{k \in {{\cal V}_a} } {{\omega _k}} ({\textbf{a}_k}{\textbf{a}_k^T} - {\underline {{d_{ik}}} ^2}) + \sum\limits_{k \in {{\cal V}_a} } {{\varphi _k}} ({\overline {{d_{ik}}} ^2} - {\textbf{a}_k}{\textbf{a}_k^T})\nonumber\\
&s.t.~~\left[ {\begin{smallmatrix}
&{ - \sum\limits_{k \in {{\cal V}_a}} {({\omega _k} - {\varphi _k}){I_2}} }&{\sum\limits_{k \in {{\cal V}_a}} {({\omega _k} - {\varphi _k}){\textbf{a}_k^T}} }\nonumber\\
&{\sum\limits_{k \in {{\cal V}_a}} {({\omega _k} - {\varphi _k}){\textbf{a}_k}} }&t
\end{smallmatrix}} \right]\succeq0\nonumber\\
&~~~~~~~ - \sum\limits_{k \in {{\cal V}_a}} {({\omega _k} - {\varphi _k})}  \ge 1 \nonumber\\
&~~~~~~~~{\omega _k},~{\varphi _k} \ge 0,~k \in {{\cal V}_a}
\end{align}
where ${ \omega }_{k}$ and ${{ \varphi }_{k}}$ are the dual variables. By setting the derivative of the Lagrangian function with respect to $\textbf{y}_i$ equal to $0$, the estimate of $\textbf{y}_i$ becomes a function of $\omega_k$ and $\varphi_k$, whose optimal values ${\hat \omega }_{k}$ and ${{\hat \varphi }_{k}}$ are obtained by solving \eqref{eq:SDP_initial}.
Then, the initial position estimate of sensor $i$ is
\begin{equation}\label{eq:x_estimate_initial}
\hat{\textbf{x}}_i(0)=\frac{{\sum\limits_{k \in {{\cal V}_a}} {({{\hat \omega }_{k}} - {{\hat \varphi }_{k}}){\textbf{a}_k}} }}{{\sum\limits_{k \in {{\cal V}_a}} {({{\hat \omega }_{k}} - {{\hat \varphi }_{k}})} }}
\end{equation}

From \eqref{eq:error_bound},
\begin{align}\label{eq:initial error bound}
\parallel \textbf{x}_i-\hat{\textbf{x}}_i(0)\parallel^2\leq \mathop {\min }\limits_{\hat{\textbf{x}}_i}\mathop {\max }\limits_{({\textbf{y}_i},\Delta_{ir})\in\mathcal{Q}_i} \{\Delta_{ir}-2\textbf{y}_i{\hat{\textbf{x}}_i}^T+{\hat{\textbf{x}}_i}{\hat{\textbf{x}}_i}^T\}=R_i^2(0)
\end{align}
Since strong duality holds, the value of $R_i^2(0)$ equals to the optimal value of \eqref{eq:SDP_initial}.

\subsection{Iterative Estimation}
After initial estimation, in the following iterative estimation algorithm, each sensor takes its neighbor nodes as `anchors', and iteratively estimates its position utilizing its neighbor nodes' position estimates and the corresponding range measurements. Suppose sensor $i$'s position estimate at the $\tau$-th iteration is $\hat{\textbf{x}}_i(\tau)$, and $\parallel\textbf{x}_i-{\hat {\textbf{x}}_i}(\tau)\parallel^2\leq R_i(\tau)^2$. Then the updated position at the $(\tau+1)$-th iteration can be obtained by solving the following optimization problem
 \begin{align}\label{eq:singleNode_original}
\mathop {\min }\limits_{{\hat {\textbf{x}}_i}} \mathop {\max }\limits_{{\textbf{y}_i}\in\mathcal{C}_i(\tau)} {{{\parallel {{\textbf{y}_i} - {\hat {\textbf{x}}_i}} \parallel}^2}}
\end{align}
where $\textbf{y}_i$ denotes the feasible value of $\textbf{x}_i$ in $\mathcal{C}_i(\tau)$ and
\begin{align}
\mathcal{C}_i(\tau)=\{\textbf{y}_i:&\parallel\textbf{y}_i-{\hat {\textbf{x}}_i}(\tau)\parallel^2\leq R_i(\tau)^2 \nonumber\\
&\underline{d_{ij}}^2 \le \parallel {{\textbf{y}_i} - {\hat{\textbf{x}}_j(\tau)}} \parallel^2 \le\overline{d_{ij}}^2 ,~~\forall j \in {{\cal N}_i} \}
\end{align}
Let $\Delta_i=\textbf{y}_i\textbf{y}_i^T$; problem \eqref{eq:singleNode_original} can be rewritten as
\begin{align}
&\mathop {\min }\limits_{{\hat {\textbf{x}}_i}} \mathop {\max }\limits_{({\textbf{y}_i},\Delta_i)\in\mathcal{G}_i(\tau)} \{\Delta_i-2\textbf{y}_i\hat{\textbf{x}}_i^T+\hat{\textbf{x}}_i\hat{\textbf{x}}_i^T\} \nonumber\\
&\mathcal{G}_i(\tau)=\{({\textbf{y}_i},\Delta_i): \parallel\textbf{y}_i-{\hat {\textbf{x}}_i}(\tau)\parallel^2\leq R_i(\tau)^2 \label{eq:RiConstraint}\\
 &~~~~~~~~~~~~~~~~~~~~~~~\underline{d_{ij}}^2 \le \textit{g}_{ij}(\tau)\le\overline{d_{ij}}^2 ,\forall j \in {{\cal N}_i} \label{eq:dijConstraint}\\
&~~~~~~~~~~~~~~~~~~~~~~~\Delta_i=\textbf{y}_i\textbf{y}_i^T\}
\end{align}
where $\textit{g}_{ij}(\tau)=\Delta_i-2\textbf{y}_i\hat{\textbf{x}}_j(\tau)^T+\hat{\textbf{x}}_j(\tau)\hat{\textbf{x}}_j(\tau)^T$.

Through a similar process to that used in equations \eqref{eq:singleNode_relaxed},\eqref{eq:singleNode_replace},\eqref{eq:SDP_initial}, i.e., relaxation, change of optimization order, and dual problem transformation, the position update of sensor $i$ at the $(\tau+1)$-th iteration becomes
\begin{small}
\begin{equation}\label{eq:x_estimate_distribute}
\hat{\textbf{x}}_i(\tau+1)=\frac{\hat{\alpha}(\tau+1)\hat{\textbf{x}}_i(\tau)+{\sum\limits_{j \in {{\mathcal N}_i}} {({{\hat \phi }_{j}}(\tau+1) - {{\hat \psi }_{j}}(\tau+1  )){\hat{\textbf{x}}_j}(\tau )} }}{\hat{\alpha}(\tau+1)+{\sum\limits_{j \in {{\mathcal N}_i}} {({{\hat \phi }_{j}}(\tau+1) - {{\hat \psi }_{j}}(\tau +1 ))} }}
\end{equation}
\end{small}
where $\hat{\alpha}(\tau+1)$, ${\hat \phi }_{j}(\tau+1)$ and ${{\hat \psi }_{j}}(\tau+1)$ are the optimal values of $\alpha$, ${ \phi }_{j}$ and  ${{ \psi }_{j}}$ in the following SDP problem:
\begin{align}\label{eq:SDP_distributed}
&\mathop {\min }\limits_{t_i, \alpha, {\phi _j},{\psi _j}} f_{i}(t_i,\alpha, {\phi _j},{\psi _j},\tau+1)\nonumber\\
&s.t.~~\left[ {\begin{smallmatrix}
&{(\alpha- \sum\limits_{j \in {{\mathcal N}_i}} {({\phi _j} - {\psi _j})}) {I_2}}&{(\alpha- \sum\limits_{j \in {{\mathcal N}_i}} {({\phi _j} - {\psi _j})})\hat{\textbf{x}}_j(\tau)^T}\nonumber\\
&{(\alpha- \sum\limits_{j \in {{\mathcal N}_i}} {({\phi _j} - {\psi _j})})\hat{\textbf{x}}_j(\tau)}&t_i
\end{smallmatrix}} \right]\succeq0,\nonumber\\
&~~~~~~~ \alpha- \sum\limits_{j \in {{\mathcal N}_i}} ({\phi _j} - {\psi _j})  \ge 1 ,\nonumber\\
&~~~~~~~\alpha, {\phi _j}, {\psi _j}\ge 0, j \in {\mathcal{N}_i}
\end{align}
where
\begin{align*}
f_{i}(t_i,\alpha, {\phi _j},{\psi _j},\tau+1)=&t_i+ \alpha(R_i(\tau)^2-{\hat{\textbf{x}}_i}(\tau)\hat{\textbf{x}}_i(\tau)^T) \nonumber\\
&+\sum\limits_{j \in {{\mathcal N}_i}} {{\phi _j}} ({\hat{\textbf{x}}_j}(\tau)\hat{\textbf{x}}_j(\tau)^T - {\underline {{d_{ij}}} ^2}) + \sum\limits_{j \in {{\mathcal N}_i}} {{\psi _j}} ({\overline {{d_{ij}}} ^2} - {\hat{\textbf{x}}_j}(\tau)\hat{\textbf{x}}_j(\tau)^T)
\end{align*}
and $\alpha$ is the dual variable associated with the inequality \eqref{eq:RiConstraint} in the Lagrange function, $\phi_j$ and $\psi_j$ are the dual variables associated with the inequalities \eqref{eq:dijConstraint} in the Lagrange function.

Let $R_i^2(\tau+1)$ denote the upper bound of the squared position estimation error of sensor $i$ at the $(\tau+1)$-th iteration, i.e.,
\begin{align}
\parallel\textbf{x}_i-\hat{\textbf{x}}_i(\tau+1)\parallel^2\leq R_i^2(\tau+1)
\end{align}
where $R_i^2(\tau+1)$ equals to the optimal value of $f_{i}(t_i,\alpha, {\phi _j},{\psi _j},\tau+1)$.
If $\parallel R_i^2(\tau+1)-R_i^2(\tau)\parallel\leq\epsilon$, where $\epsilon$ is a very small constant, we regard the position estimate of sensor $i$ as having converged to a steady state, and mark it as `localized'. The estimation of sensor $i$'s position will be terminated. The network localization will be terminated when all the sensors are `localized'.

The procedures of the distributed estimation including initial estimation are illustrated in Algorithm 1.

\begin{algorithm}\label{alg:distributed}
\caption{: Distributed Algorithm} {\small{
\begin{algorithmic}[1]
\FOR {$i=1$ to $n$}
\STATE \textbf{Initial distance bound estimation}:
\STATE Compute $\underline {{d_{ik}}}$ and ${\overline{{d_{ik}}}}$, $\forall k \in \mathcal{V}_a$, as \eqref{eq:dv_lowerbound} and \eqref{eq:dv_upperbound}.
\STATE \textbf{Initial position estimation}:
\STATE Compute $\hat{\textbf{x}}_i(0)$ using \eqref{eq:x_estimate_initial}.
\ENDFOR
\STATE \textbf{Iterative position estimation}:
\REPEAT
\FORALL {$i\in \mathcal{V}_x$ [in parallel] }
\STATE Update sensor $i$'s position $\hat{\textbf{x}}_i(\tau)$ as \eqref{eq:x_estimate_distribute}.
\IF {$\parallel R_i^2(\tau+1)-R_i^2(\tau)\parallel\leq\epsilon$}
\STATE Mark sensor $i$ as `localized'.
\ENDIF
\ENDFOR
\STATE $\tau=\tau+1$
\UNTIL {All sensors are `localized'}
\end{algorithmic} }}
\end{algorithm}
\vspace{-25pt}
\subsection{Convergence Analysis}
A key convergence property of Algorithm 1 is provided by the following theorem.
\begin{theorem}\label{theorem:convergence}
Let $\{\hat{\textbf{x}}_i(\tau)\}_{\tau=0}^{\infty}, i\in \mathcal{V}_x$ be the sequence of sensor $i$'s position estimates generated by Algorithm 1 and let the corresponding upper bounds of the squared position estimation errors be $\{R_i^2(\tau)\}_{\tau=0}^{\infty}$. Then for every $\tau$, we have
\begin{align}\label{eq:boundConvergence}
R_i^2(\tau+1)\leq R_i^2(\tau)
\end{align}
and
\begin{equation}\label{eq:R_converge}
\mathop {\lim }\limits_{\tau  \to \infty }(R_i^2(\tau+1)-R_i^2(\tau))=0
\end{equation}
and
\begin{equation}\label{eq:x_converge}
\mathop {\lim }\limits_{\tau  \to \infty }\parallel\hat {\textbf{x}}_i(\tau+1)-\hat {\textbf{x}}_i(\tau)\parallel=0
\end{equation}
\end{theorem}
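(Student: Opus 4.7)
The overall strategy is to observe that all three conclusions cascade from a single structural property of the iterative subproblem \eqref{eq:singleNode_original}: its relaxed feasible set (constructed in analogy with \eqref{eq:singleNode_relaxed}) always contains the ball constraint \eqref{eq:RiConstraint}, i.e., $\|\textbf{y}_i - \hat{\textbf{x}}_i(\tau)\|^2 \leq R_i^2(\tau)$, carried over from the previous iteration. The plan is first to use this feature to establish the monotonicity \eqref{eq:boundConvergence}, then to obtain \eqref{eq:R_converge} from the monotone convergence theorem, and finally to deduce \eqref{eq:x_converge} via a short algebraic rearrangement that uses strong duality of the underlying SDP.

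For \eqref{eq:boundConvergence}: the value $R_i^2(\tau+1)$ equals the optimum of the outer minimization in the iteration-$(\tau+1)$ relaxed min-max problem. Plugging the feasible (but not necessarily optimal) outer variable $\hat{\textbf{x}}_i = \hat{\textbf{x}}_i(\tau)$ into that problem produces the upper bound
\begin{align*}
R_i^2(\tau+1) \leq \max_{(\textbf{y}_i,\Delta_{ir})} \left\{\Delta_{ir} - 2\textbf{y}_i\hat{\textbf{x}}_i(\tau)^T + \hat{\textbf{x}}_i(\tau)\hat{\textbf{x}}_i(\tau)^T\right\},
\end{align*}
where the maximum is taken over the relaxation of $\mathcal{G}_i(\tau)$. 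The relaxed form of constraint \eqref{eq:RiConstraint} is exactly the requirement that this expression be bounded by $R_i^2(\tau)$, so the right-hand side is at most $R_i^2(\tau)$, giving \eqref{eq:boundConvergence}. Consequently $\{R_i^2(\tau)\}$ is nonincreasing and bounded below by $0$, and the monotone convergence theorem yields \eqref{eq:R_converge}.

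For \eqref{eq:x_converge}: strong duality holds for the relaxed iteration problem (Slater's condition is satisfied by the strictly feasible relaxation of $\Delta_i = \textbf{y}_i\textbf{y}_i^T$), so a saddle point $(\hat{\textbf{x}}^{*}, \textbf{y}^{*}, \Delta_{ir}^{*})$ exists. The stationarity condition of the inner minimization in the interchanged max-min forces $\hat{\textbf{x}}^{*} = \textbf{y}^{*}$, whence $\hat{\textbf{x}}_i(\tau+1) = \textbf{y}^{*}$ and the optimal value is $R_i^2(\tau+1) = \Delta_{ir}^{*} - \|\textbf{y}^{*}\|^2$. Since $(\textbf{y}^{*}, \Delta_{ir}^{*})$ is feasible, the relaxed ball constraint gives $\Delta_{ir}^{*} - 2\textbf{y}^{*}\hat{\textbf{x}}_i(\tau)^T + \|\hat{\textbf{x}}_i(\tau)\|^2 \leq R_i^2(\tau)$. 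Substituting $\Delta_{ir}^{*} = R_i^2(\tau+1) + \|\textbf{y}^{*}\|^2$ and rearranging gives
\begin{align*}
\|\hat{\textbf{x}}_i(\tau+1) - \hat{\textbf{x}}_i(\tau)\|^2 \leq R_i^2(\tau) - R_i^2(\tau+1),
\end{align*}
and \eqref{eq:x_converge} then follows at once from \eqref{eq:R_converge}. I expect the step requiring the most care to be the identification $\textbf{y}^{*} = \hat{\textbf{x}}_i(\tau+1)$: the closed-form update \eqref{eq:x_estimate_distribute} is written in terms of dual variables of \eqref{eq:SDP_distributed}, so matching it with the primal optimizer $\textbf{y}^{*}$ relies on strong duality being invoked uniformly in $\tau$, which is what makes the telescoping inequality above collapse cleanly.
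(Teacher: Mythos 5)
Your proof is correct, and its engine --- the inequality $R_i^2(\tau+1)+\parallel\hat{\textbf{x}}_i(\tau+1)-\hat{\textbf{x}}_i(\tau)\parallel^2\leq R_i^2(\tau)$, obtained from feasibility of the optimizer under the ball constraint \eqref{eq:RiConstraint} carried over from iteration $\tau$ --- is exactly the paper's inequality \eqref{eq:R_i_plus}. The surrounding logic is organized differently, and in two places more cleanly. First, you obtain \eqref{eq:boundConvergence} directly by substituting the feasible outer point $\hat{\textbf{x}}_i=\hat{\textbf{x}}_i(\tau)$ into the min-max, whereas the paper reads it off from the explicit expression $\min\{u_i(\tau),u_{ij}(\tau)\}-\hat{\textbf{x}}_i(\tau+1)\hat{\textbf{x}}_i(\tau+1)^T$ for the constrained maximum over $\Delta_i$; both are valid. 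Second, you derive \eqref{eq:R_converge} from the monotone convergence theorem and then \eqref{eq:x_converge} from the key inequality, while the paper telescopes the key inequality to get square-summability of the increments (hence \eqref{eq:x_converge}, in fact a slightly stronger conclusion) and then proves \eqref{eq:R_converge} by a two-case analysis of whether $u_{ij}$ or $u_i$ is active in the limit --- an argument that is redundant given monotonicity and boundedness below, which your route makes explicit. The one step you flag as delicate, identifying the primal maximizer $\textbf{y}^*$ with the dual-variable formula \eqref{eq:x_estimate_distribute}, is handled the same way in the paper (strong duality of \eqref{eq:SDP_distributed} invoked at every iteration) and is not a gap.
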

\begin{proof}

For sensor $i$, at the $(\tau+1)$-th iteration, we have
\begin{align}\label{eq:bound_t+1}
R_i^2(\tau+1)=\mathop {\min }\limits_{{\hat {\textbf{x}}_i}} \mathop {\max }\limits_{({\textbf{y}_i},\Delta_i)\in\mathcal{Q}_i(\tau)} \{\Delta_i-2\textbf{y}_i\hat{\textbf{x}}_i^T+\hat{\textbf{x}}_i\hat{\textbf{x}}_i^T\}
\end{align}
where
\begin{align}
\mathcal{Q}_i(\tau)=\{({\textbf{y}_i},\Delta_i):  &\Delta_i-2\textbf{y}_i\hat{\textbf{x}}_i(\tau)^T+\hat{\textbf{x}}_i(\tau)\hat{\textbf{x}}_i(\tau)^T\leq R_i(\tau)^2 \nonumber\\
 &\underline{d_{ij}}^2 \le \textit{g}_{ij}(\tau)\le\overline{d_{ij}}^2 ,\forall j \in {{\cal N}_i} \nonumber\\
&\Delta_i\geq\textbf{y}_i\textbf{y}_i^T\} \nonumber
\end{align}
By changing the order of optimization and solving the minimization part, in a similar way with \eqref{eq:singleNode_replace}, the maximization part becomes
\begin{align}
R_i^2(\tau+1)=\mathop {\max }\limits_{({\textbf{y}_i},\Delta_i)\in\mathcal{Q}_{i}(\tau)}(\Delta_i-\textbf{y}_i\textbf{y}_i^T)
\end{align}
The optimal value of $\textbf{y}_i$ is $\hat{\textbf{x}}_i(\tau+1)$ in \eqref{eq:x_estimate_distribute}. Then, we have
\begin{align}
R_i^2(\tau+1)=\mathop {\max }\limits_{\Delta_i\in\mathcal{S}_{i}(\tau)}(\Delta_i-\hat{\textbf{x}}_i(\tau+1)\hat{\textbf{x}}_i(\tau+1)^T)
\end{align}
where
\begin{align*}
\mathcal{S}_{i}(\tau)=\{\Delta_i:~  &\Delta_i-2\hat{\textbf{x}}_i(\tau+1)\hat{\textbf{x}}_i(\tau)^T+\hat{\textbf{x}}_i(\tau)\hat{\textbf{x}}_i(\tau)^T\leq R_i(\tau)^2 \nonumber\\
 &\underline{d_{ij}}^2 \le \Delta_i-2\hat{\textbf{x}}_i(\tau+1)\hat{\textbf{x}}_j(\tau)^T+\hat{\textbf{x}}_j(\tau)\hat{\textbf{x}}_j(\tau)^T\le\overline{d_{ij}}^2 ,\forall j \in {{\cal N}_i} \nonumber\\
&\Delta_i\geq\hat{\textbf{x}}_i(\tau+1)\hat{\textbf{x}}_i(\tau+1)^T\}
\end{align*}
Let $u_i(\tau)=R_i(\tau)^2+2\hat{\textbf{x}}_i(\tau+1)\hat{\textbf{x}}_i(\tau)^T-\hat{\textbf{x}}_i(\tau)\hat{\textbf{x}}_i(\tau)^T$ and $u_{ij}(\tau)=\overline{d_{ij}}^2+2\hat{\textbf{x}}_i(\tau+1)\hat{\textbf{x}}_j(\tau)^T-\hat{\textbf{x}}_j(\tau)\hat{\textbf{x}}_j(\tau)^T$; then the value of $R_i^2(\tau+1)$ is
\begin{align}\label{eq:R_i_plus}
R_i^2(\tau+1)&=\min\{u_i(\tau),u_{ij}(\tau),\forall j\in\mathcal{N}_i\}-\hat{\textbf{x}}_i(\tau+1)\hat{\textbf{x}}_i(\tau+1)^T\nonumber\\
&\leq u_i(\tau)-\hat{\textbf{x}}_i(\tau+1)\hat{\textbf{x}}_i(\tau+1)^T \nonumber\\
&= R_i^2(\tau)-\parallel\hat{\textbf{x}}_i(\tau+1)-\hat{\textbf{x}}_i(\tau)\parallel^2
\end{align}
Straightforwardly, we can easily obtain \eqref{eq:boundConvergence}. By recursion, we have
\begin{align}
R_i^2(\tau+1)\leq R_i^2(\tau)-\parallel\hat{\textbf{x}}_i(\tau+1)-\hat{\textbf{x}}_i(\tau)\parallel^2\leq \cdots \leq R_i^2(0)-\sum\limits_{l = 0}^{\tau}  {\parallel\hat{\textbf{x}}_i(l+1)-\hat{\textbf{x}}_i(l)\parallel^2}
\end{align}
Since $R_i^2(\tau+1)\geq0$, we have $\sum\limits_{l = 1}^{\infty}  {\parallel\hat{\textbf{x}}_i(l+1)-\hat{\textbf{x}}_i(l)\parallel^2}\leq R_i^2(0)$, which means an infinite sum of non-negative values is bounded. Therefore, \eqref{eq:x_converge} must hold.

From \eqref{eq:R_i_plus}, $\forall j \in {\mathcal{N}_i}$, if $\mathop {\lim }\limits_{\tau  \to \infty } {u_{ij}}(\tau ) \le \mathop {\lim }\limits_{\tau  \to \infty } {u_i}(\tau )$,
\begin{align}\label{eq:limitR1}
\mathop {\lim }\limits_{\tau  \to \infty } R_i^2(\tau  + 1) &=\mathop {\lim }\limits_{\tau  \to \infty }\mathop {\min}\limits_{j \in {\mathcal{N}_i}} {u_{ij}}(\tau  + 1) - {{\hat {\textbf{x}}}_i}(\tau  + 1){{\hat {\textbf{x}}}_i}{{(\tau  + 1)}^T} \nonumber\\
&= \mathop {\lim }\limits_{\tau  \to \infty }\mathop {\min}\limits_{j \in {\mathcal{N}_i}} \left( {\overline {d _{ij}}^2 - {{\parallel {{{\hat {\textbf{x}}}_i}(\tau  + 1) - {{\hat {\textbf{x}}}_j}(\tau )} \parallel}^2}} \right)
\end{align}
If $\mathop {\lim }\limits_{\tau  \to \infty } {u_{ij}}(\tau ) > \mathop {\lim }\limits_{\tau  \to \infty } {u_i}(\tau )$,
\begin{align}\label{eq:limitR2}
\mathop {\lim }\limits_{\tau  \to \infty } R_i^2(\tau  + 1) &=\mathop {\lim }\limits_{\tau  \to \infty }{u_{i}}(\tau) - {{\hat {\textbf{x}}}_i}(\tau  + 1){{\hat {\textbf{x}}}_i}{{(\tau  + 1)}^T} \nonumber\\
&= \mathop {\lim }\limits_{\tau  \to \infty }R_i^2(\tau ) - {{\parallel {{{\hat {\textbf{x}}}_i}(\tau  + 1) - {{\hat {\textbf{x}}}_i}(\tau )} \parallel}^2}
\end{align}
Since $\mathop {\lim }\limits_{\tau  \to \infty }{\textbf{x}}_i(\tau +1)=\mathop {\lim }\limits_{\tau  \to \infty }{\textbf{x}}_i(\tau), \forall i \in \mathcal{V}_x$, from \eqref{eq:limitR1} and \eqref{eq:limitR2}, we always have $\mathop {\lim }\limits_{\tau  \to \infty }R_i^2(\tau+1 )=\mathop {\lim }\limits_{\tau  \to \infty }R_i^2(\tau )$. Thus \eqref{eq:R_converge} holds.
\end{proof}
\begin{remark}
Theorem 1 shows that the upper bounds of the position estimation errors generated by Algorithm 1, i.e., $\{R_i^2(\tau)\}_{\tau=0}^{\infty},~\forall i\in \mathcal{V}_x$, are non-increasing positive sequences. When $\tau \to \infty$, $R_i^2(\tau)$ will converge to a fixed value, not necessarily 0, and $\textbf{x}_i(\tau)$ satisfies $\mathop {\lim }\limits_{\tau \to\infty }\parallel\textbf{x}_i(\tau)-\textbf{x}_i\parallel^2\leq \mathop {\lim }\limits_{\tau  \to \infty }R_i^2(\tau)$. The converged value of $R_i^2(\tau)$ is determined by the network configuration and measurement errors.
\end{remark}
\subsection{Computational Complexity and Communication Cost}

In the initial estimation, a sensor only needs to broadcast the necessary information once. The information that sensor $i$ broadcasts to its neighbors is $\{\underline{d_{ik}}, \overline{d_{ik}}, n_{ik}\}, \forall k\in \mathcal{N}_a$. Therefore, the communication cost for sensor $i$ is $3m\mid\mathcal{N}_{ij}\mid$. The computational cost comes from the SDP problem in \eqref{eq:SDP_initial}, in which there are $2m+1$ scalar variables to be optimized. The computational complexity of initial estimation is $\mathcal{{O}}(m^3)$.

In the iterative estimation, at each iteration, for sensor $i$, the computational cost comes from the SDP problem in \eqref{eq:SDP_distributed}, in which there are $2|\mathcal{N}_i|+2$ scalar variables to be optimized, and the number of scalar inequality constraints is $2|\mathcal{N}_i|+2$, the number of LMI constraints is $1$, the size of the LMI is at most $3\times3$. Therefore, the computational complexity of \eqref{eq:SDP_distributed} is $\mathcal{{O}}(|\mathcal{N}_i|^3)$ \cite{boyd2004convex}. The communication cost at each iteration for sensor $i$ comes from the information exchange. Senor $i$ needs to send its position estimate at each iteration to its neighbor nodes. For 2-D localization, the communication cost for sensor $i$ is proportional to $2|\mathcal{N}_i|$.

\section{Simulation Results} \label{sec:simulations}
In this section, we conduct extensive simulations to illustrate the performance of the relaxed estimation in Section \ref{sec:minmax_algorithm}, denoted by MinMax-SDP, and the distributed algorithm in Section \ref{sec:distributed alg}, denoted by Dis-MinMax.

\subsection{Performance of Centralized Algorithm}
To investigate the performance of our proposed centralized algorithm, we first compare our algorithm with three centralized algorithms when the measurement errors follow two well-known distributions, i.e., uniform distribution and Gaussian distribution \cite{MLE2014TSP}; second, we consider a mixture error model with Gaussian distributed errors as inliers and uniformly distributed errors as outliers; third, we test the performance of our proposed algorithm using the experimental data provided by Patwari \emph{et al}. in \cite{patwari2003TSP}.

Our simulations are conducted in a unit square area. In this area, $50$ sensors with unknown positions are randomly deployed using a uniform distribution, along with four anchors whose positions are known as $(-0.3,-0.3)$, $(0.3,-0.3)$, $(-0.3,0.3)$ and $(0.3,0.3)$. To guarantee the unique localizability of this network, we set the sensing range of each node as $R=0.5$.

\subsubsection{Measurement error follows uniform or Gaussian distribution}
In this part we compare MinMax-SDP with three centralized algorithms: \emph{a}) SDP in \cite{biswas2006semidefinite}, which minimizes the norm of the error between the squared range measurements and the squared range estimates; \emph{b}) ESDP in \cite{ESDP}, which involves a further relaxation of SDP, and is of less computational complexity than SDP; \emph{c}) E-ML in \cite{MLE2014TSP}, which is an MLE with ESDP relaxation.

In order to show the robustness to measurement error distributions, we consider two scenarios: 1) the measurement errors are uniformly distributed ; 2) the measurement errors are Gaussian distributed. That is, the actual measurement errors are determined in accord with these statistics, but our algorithm runs under the assumption that the errors are limited by a fixed known bound. The bound will be related to the statistics of the actual measurement errors.

Fig.~\ref{fig:locMapGama0.02_uniform} shows the localization results when the measurement errors are uniformly distributed in $[-\gamma,\gamma]$ and $\gamma=0.02$. We can see that the localization errors of the four algorithms are all very small. The root mean squared error (RMSE) of MinMax-SDP is slightly higher than SDP, but smaller than ESDP and E-ML. However, when $\gamma$ increases to $0.1$, from Fig.~\ref{fig:locMapGama0.1_uniform}, all of SDP, ESDP, and E-ML collapse, because the projection of high-dimension solution onto a lower dimension results in points getting `crowded' together when measurement errors become large\cite{biswas2006semidefinite}. MinMax-SDP can still work with increased localization error. The reason behind this phenomenon is that when the measurement errors are large, our proposed algorithm can guarantee that the estimation error is bounded as shown in \eqref{eq:error_bound}. However, SDP, ESDP and E-ML are known to collapse when the measurement errors become large \cite{anderson2010formal}. When we set a larger measurement error bound, the worst-case measurement error would be larger and the probability for getting larger measurement errors increases. Therefore, our proposed algorithm outperforms SDP, ESDP and E-ML when the measurement error bound becomes large.

To compare the statistical performance of these algorithms, we conduct 50 Monte Carlo trials. Fig.~\ref{fig:RMSE_uniform} shows the RMSEs of these four algorithms under different measurement error bounds, from which, we can see, MinMax-SDP performs the best, and the advantage is much more obvious when the measurement error bound becomes larger. When the measurement errors follow Gaussian distribution, with zero-mean and variance as $\sigma^2$, the measurement error bound utilized in MinMax-SDP is taken as $\gamma=3\sigma$. The only information about the measurement errors for MinMax-SDP, SDP and ESDP is the bound, while E-ML has full knowledge of the measurement error distribution. From the RMSEs shown in Fig.~\ref{fig:RMSE_Gauss}, we can see that the performance of MinMax-SDP is comparable with that of E-ML, and much better than that of SDP and ESDP.

 \begin{figure}[!ht]
  \centering
\subfigure[MinMax-SDP]{
 \begin{minipage}[t]{0.35\linewidth}
 \includegraphics[width=\textwidth]{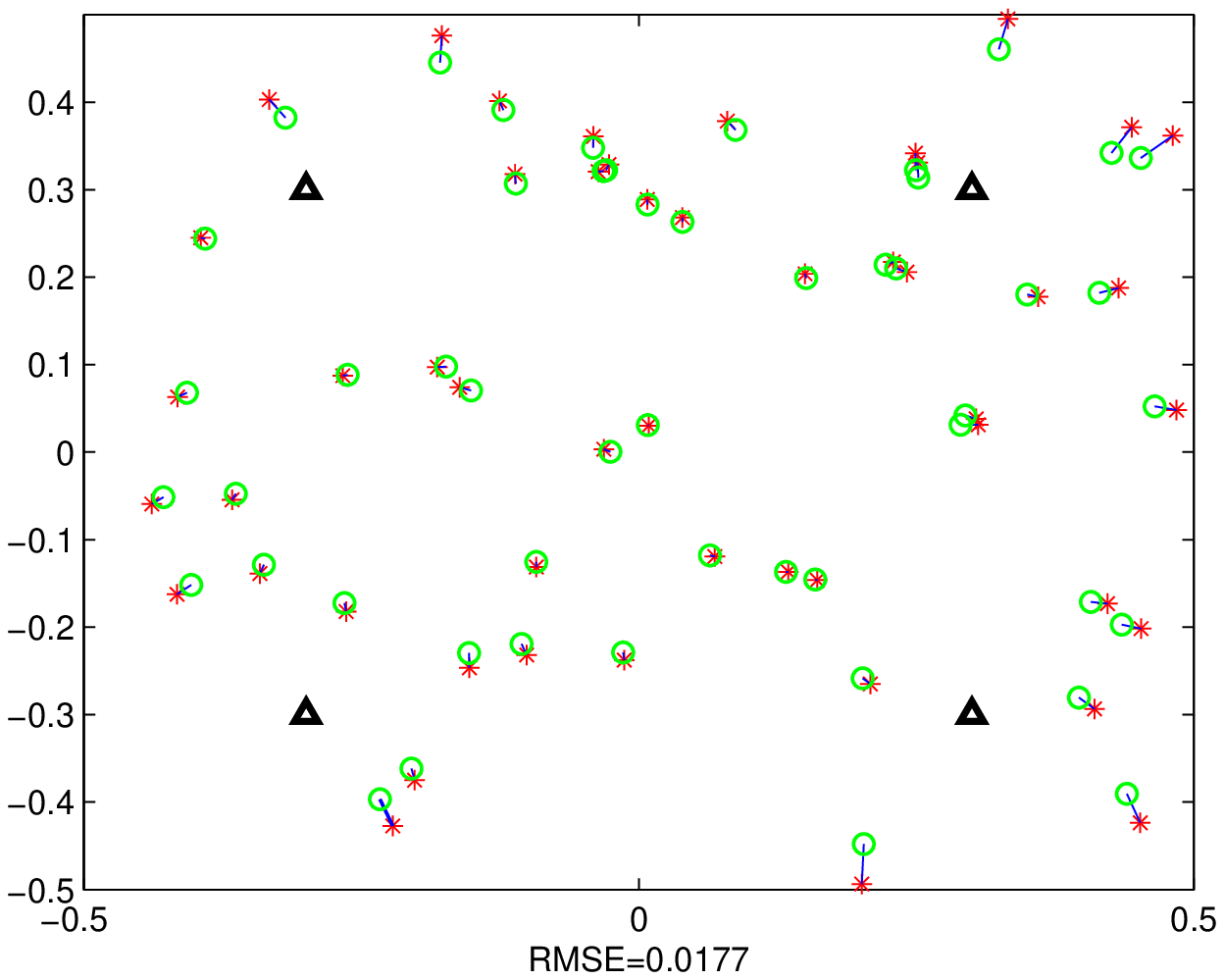}
 \label{subfig:minmax_uniform_Gama0.02}
 \vspace{-15pt}
 \end{minipage}}
\subfigure[SDP]{
 \begin{minipage}[t]{0.35\linewidth}
 \includegraphics[width=\textwidth]{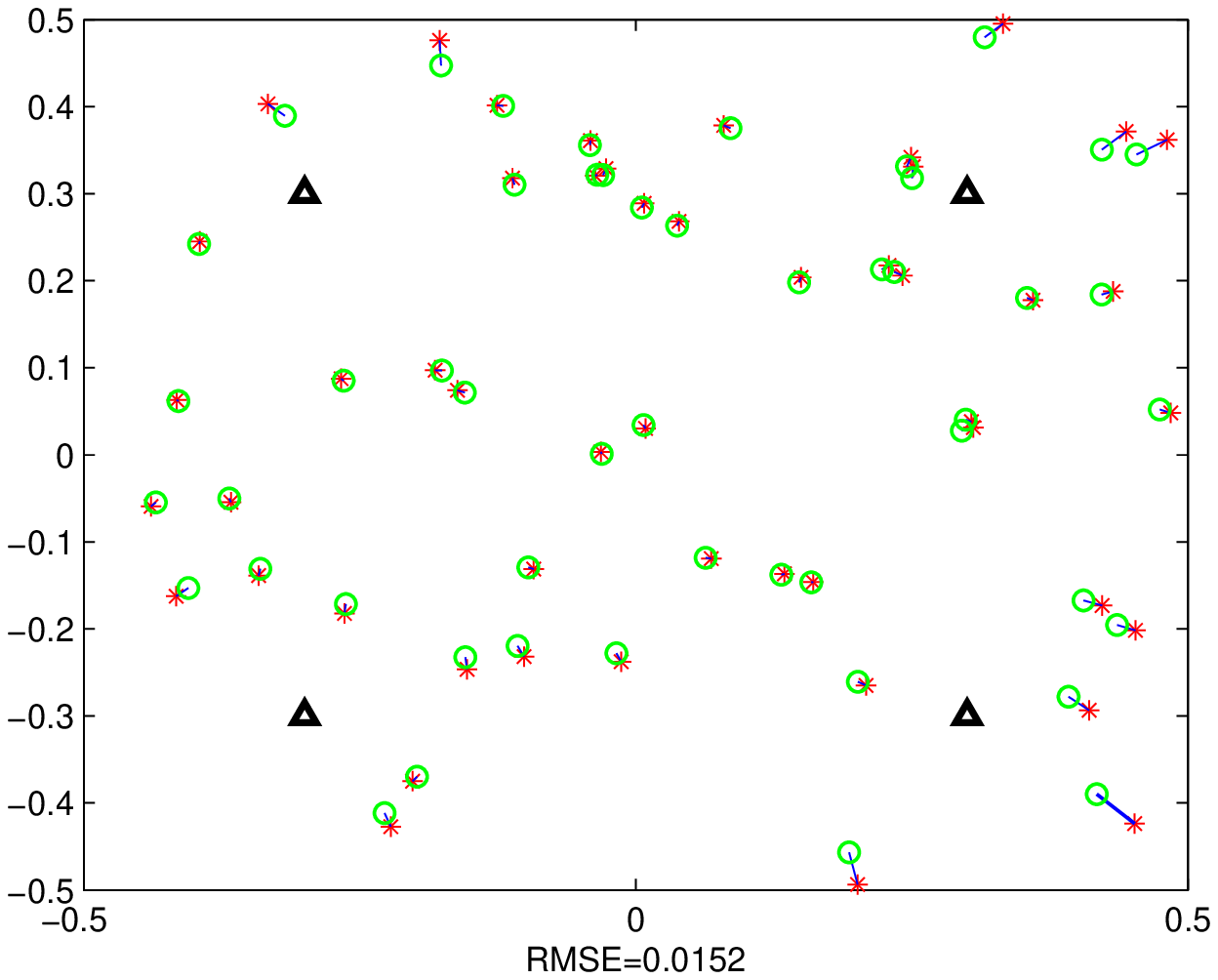}
  \label{subfig:SDP_uniform_Gama0.02}
  \vspace{-15pt}
 \end{minipage}}
 \subfigure[ESDP]{
 \begin{minipage}[t]{0.35\linewidth}
 \includegraphics[width=\textwidth]{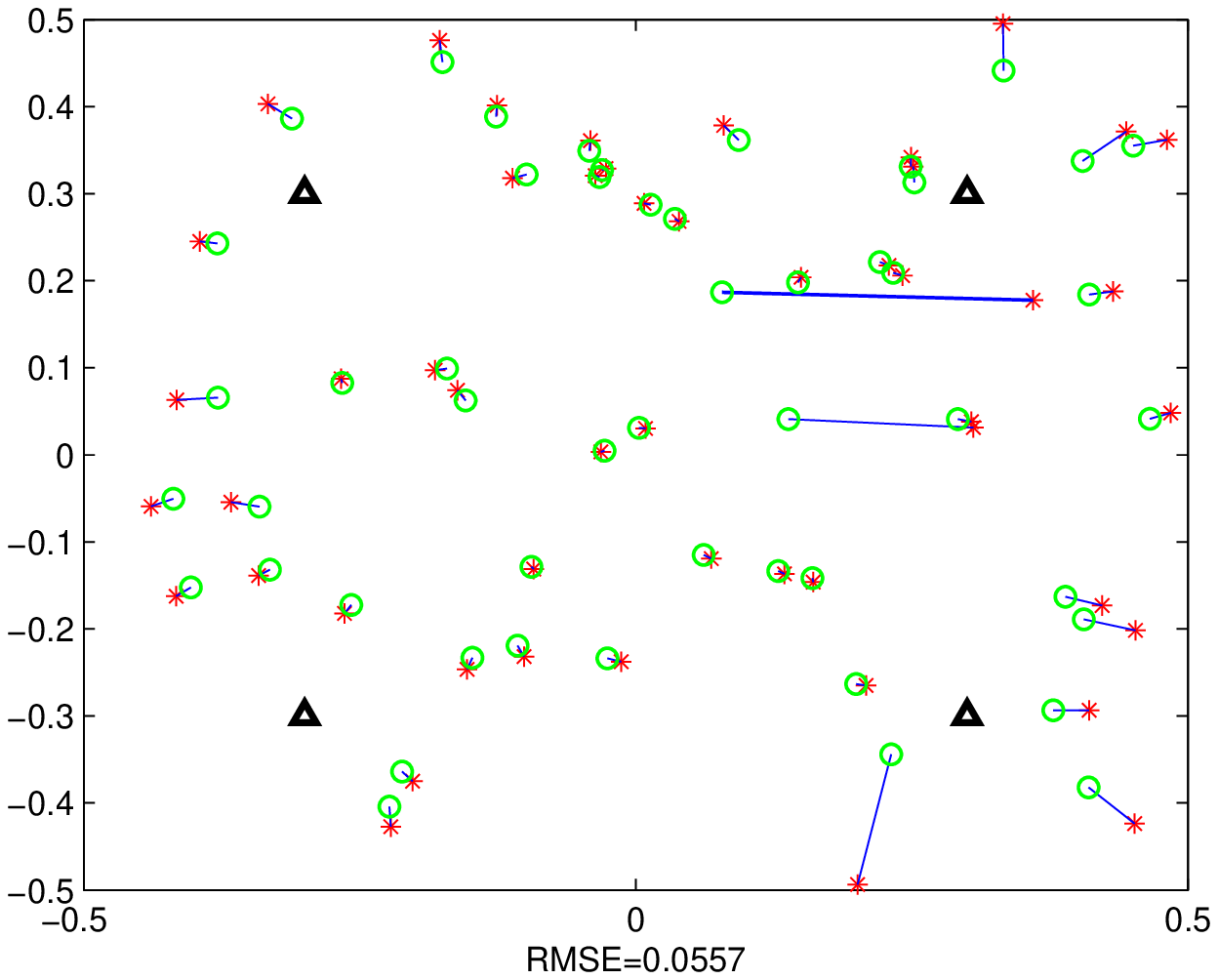}
  \label{subfig:ESDP_uniform_Gama0.02}
  \vspace{-15pt}
 \end{minipage}}
 \subfigure[E-ML]{
 \begin{minipage}[t]{0.35\linewidth}
 \includegraphics[width=\textwidth]{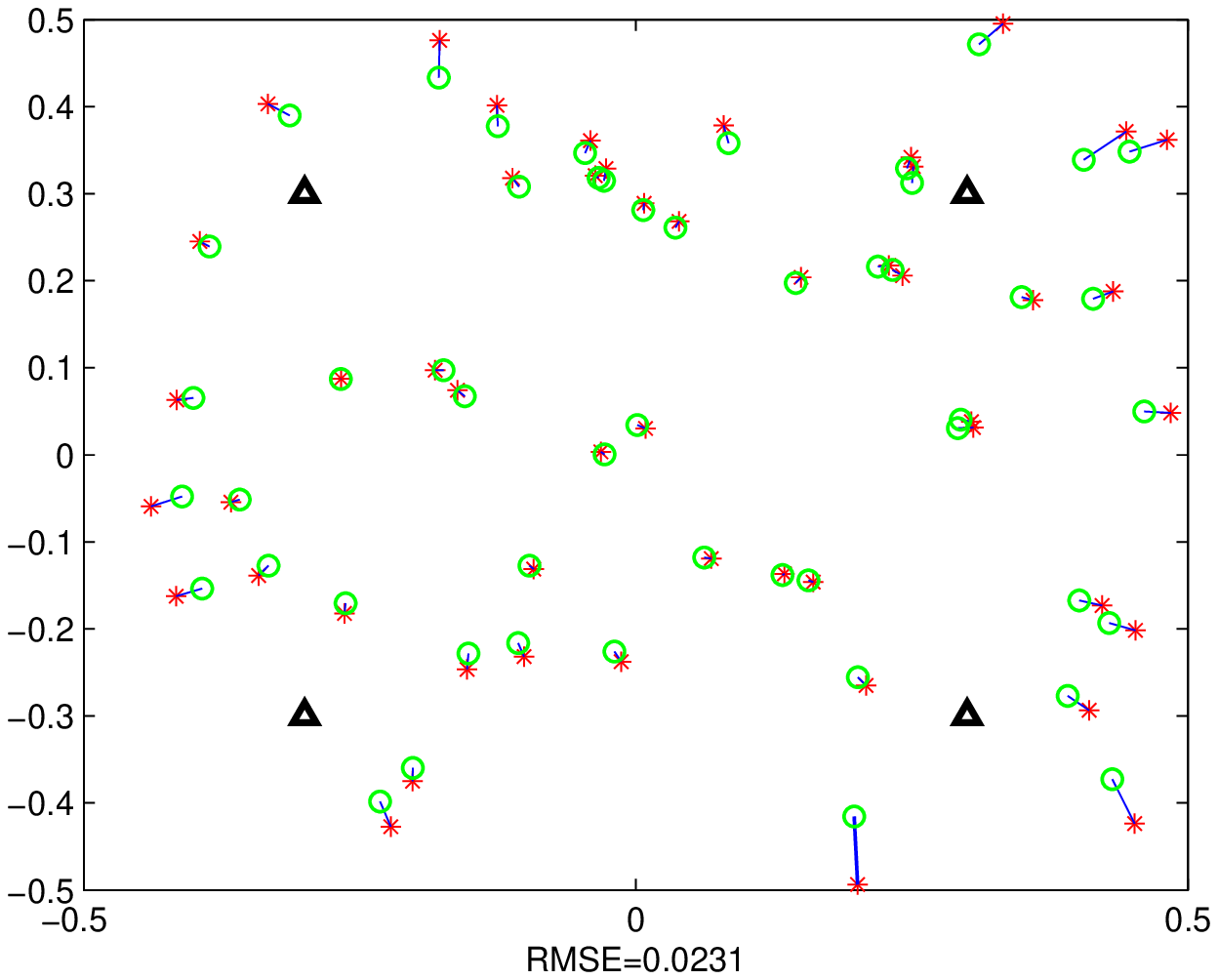}
  \label{subfig:EML_uniform_Gama0.02}
 \end{minipage}}
 \caption{Localization results under uniformly distributed measurement errors, where the triangles denote the anchors, the stars denote the true sensors' positions and circles denote the estimated sensors' positions. Measurement error bound $\gamma=0.02$.}
\label{fig:locMapGama0.02_uniform}
\vspace{-20pt}
 \end{figure}

  \begin{figure}[!ht]
  \centering
\subfigure[MinMax-SDP]{
 \begin{minipage}[t]{0.35\linewidth}
 \includegraphics[width=\textwidth]{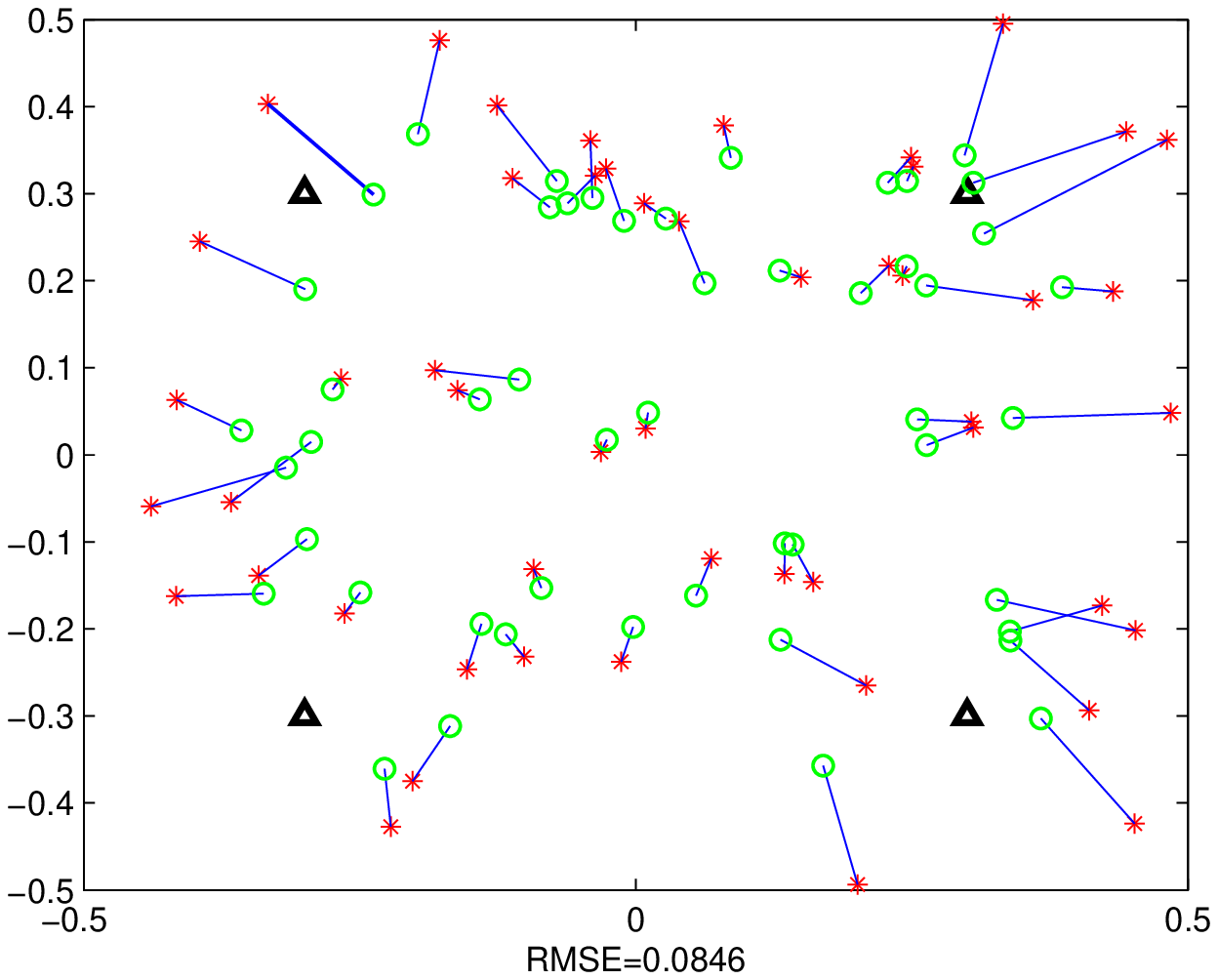}
 \label{subfig:minmax_uniform_Gama0.1}
 \vspace{-15pt}
 \end{minipage}}
\subfigure[SDP]{
 \begin{minipage}[t]{0.35\linewidth}
 \includegraphics[width=\textwidth]{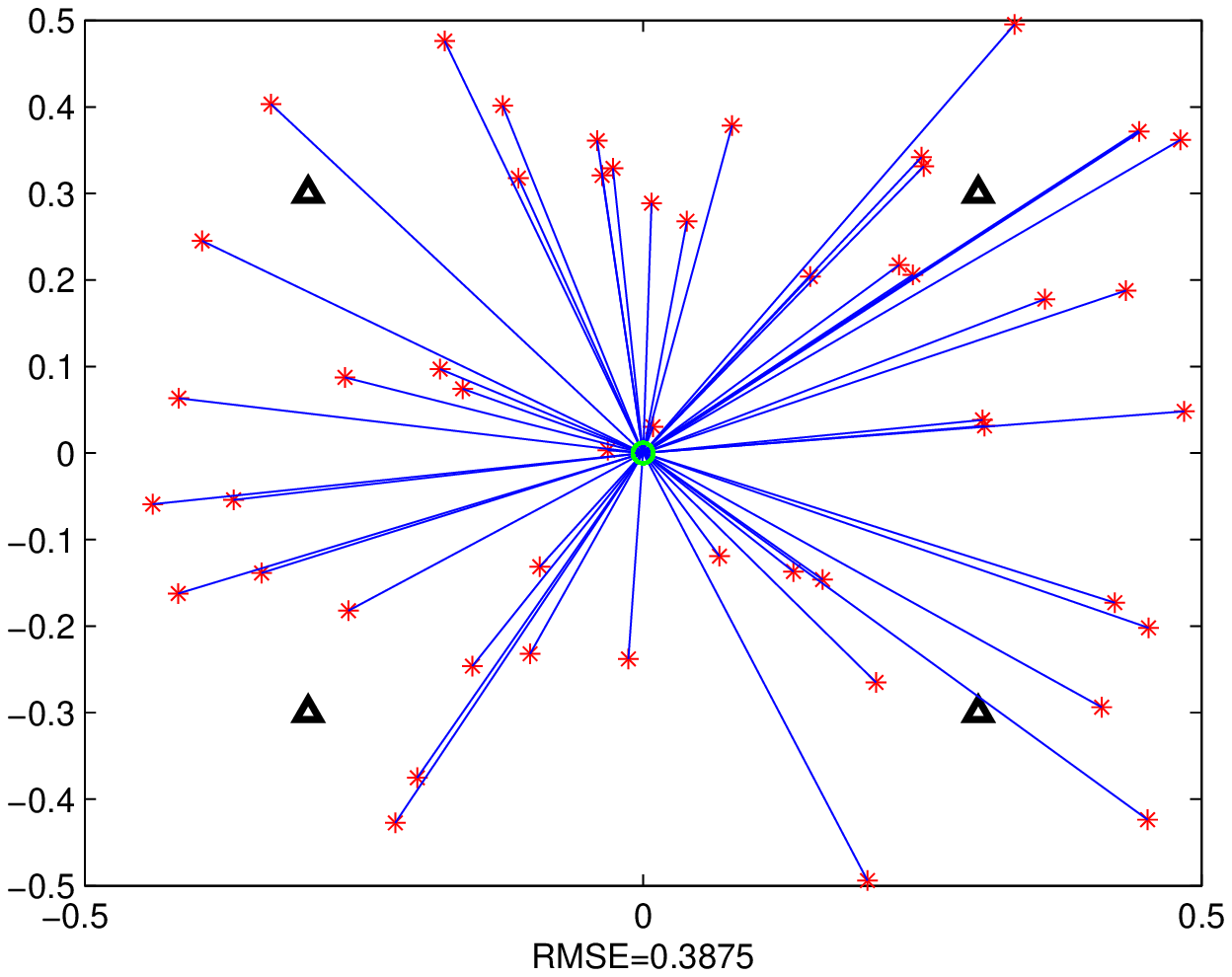}
  \label{subfig:SDP_uniform_Gama0.1}
  \vspace{-15pt}
 \end{minipage}}
 \subfigure[ESDP]{
 \begin{minipage}[t]{0.35\linewidth}
 \includegraphics[width=\textwidth]{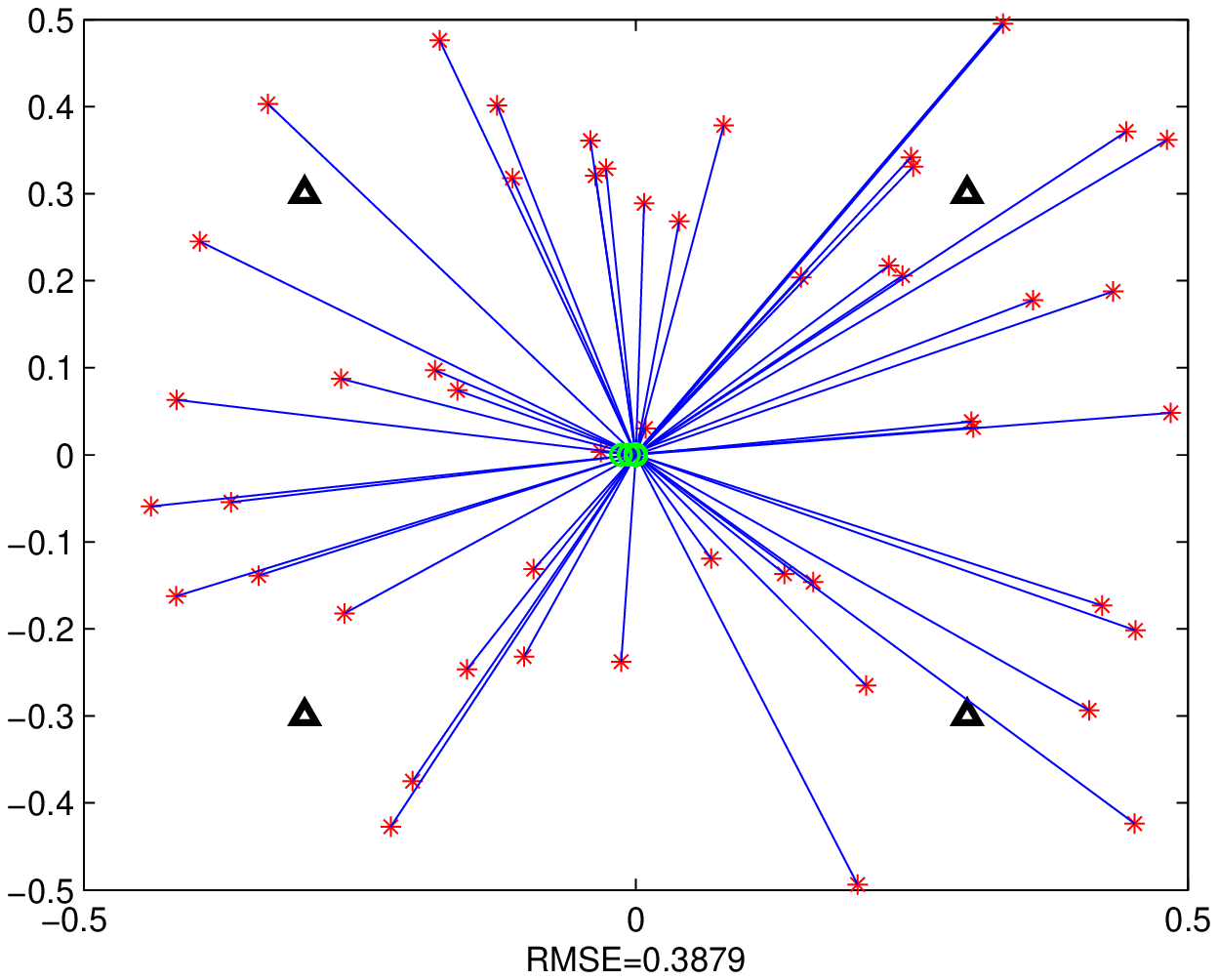}
  \label{subfig:ESDP_uniform_Gama0.1}
  \vspace{-15pt}
 \end{minipage}}
 \subfigure[E-ML]{
 \begin{minipage}[t]{0.35\linewidth}
 \includegraphics[width=\textwidth]{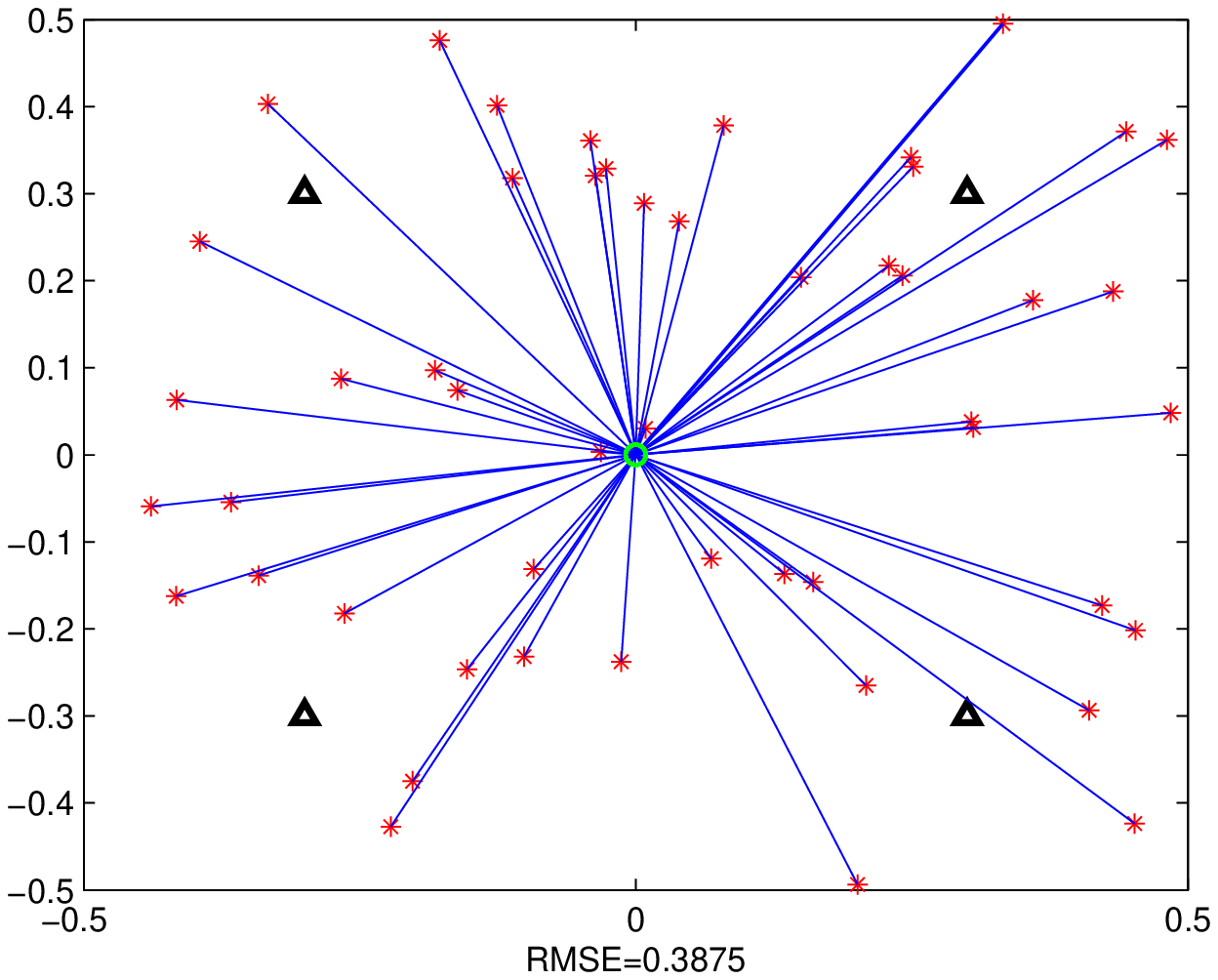}
  \label{subfig:EML_uniform_Gama0.1}
  \vspace{-15pt}
 \end{minipage}}
 \caption{Localization results under uniformly distributed measurement errors, where the triangles denote the anchors, the stars denote the true sensors' positions and circles denote the estimated sensors' positions. Measurement error bound $\gamma=0.1$.}
\label{fig:locMapGama0.1_uniform}
\vspace{-15pt}
 \end{figure}

 \begin{figure}[!ht]
  \centering
 \includegraphics[width=0.35\textwidth]{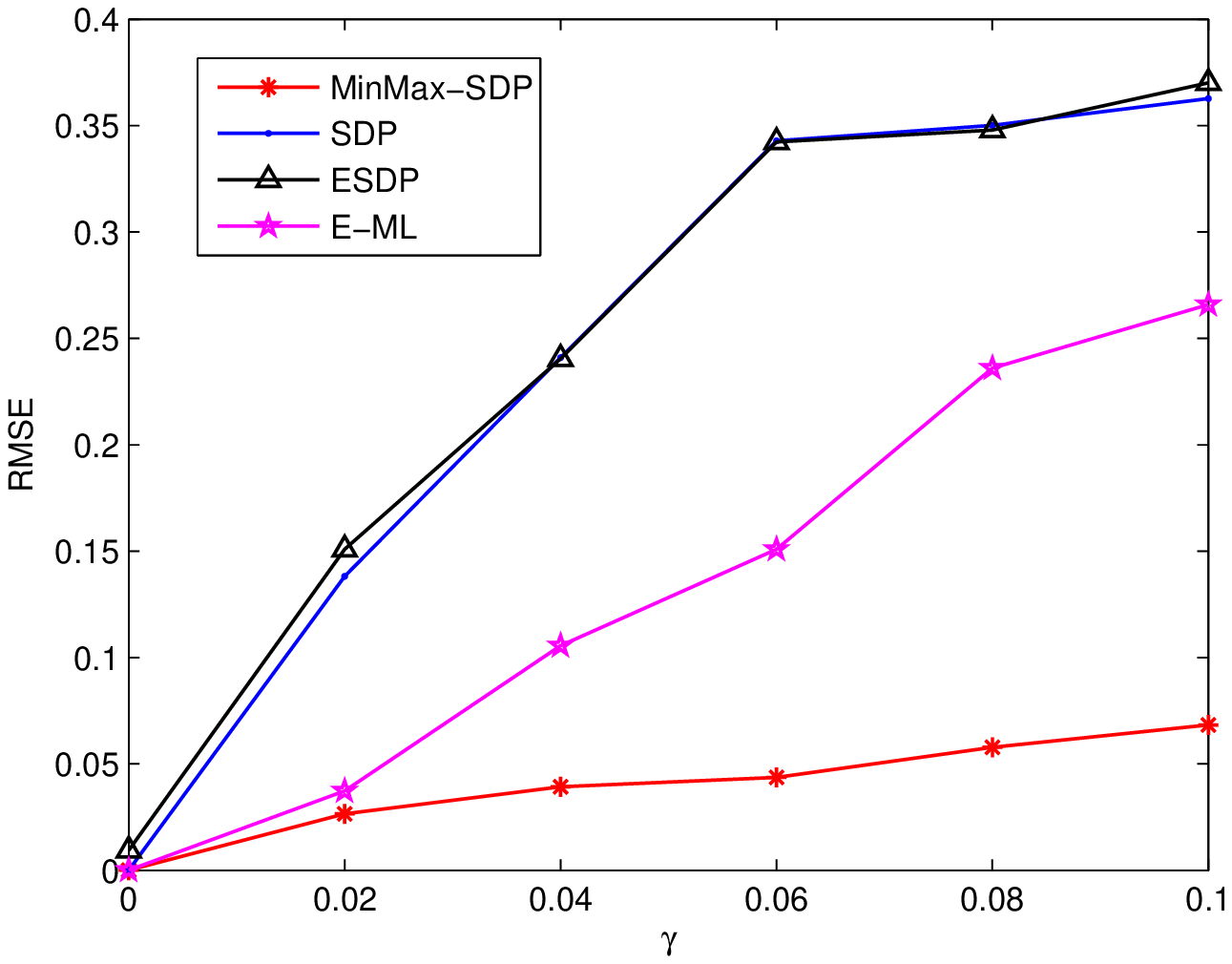}
 \vspace{-15pt}
 \caption{RMSEs under uniformly distributed measurement errors with different bounds in 50 trials.}
\label{fig:RMSE_uniform}
\vspace{-20pt}
 \end{figure}

By comparing Fig.~\ref{fig:RMSE_Gauss} with Fig.~\ref{fig:RMSE_uniform}, we find that, MinMax-SDP performs better than E-ML when the measurement errors are uniformly distributed, whereas it performs worse than E-ML when the measurement errors are Gaussian distributed.
At the first sight, this result appears to be counter-intuitive because E-ML, which has accurate knowledge of the measurement error distribution, should deliver better performance than the proposed scheme in both cases. We offer the following explanation for the observed result. When the measurement errors are uniformly distributed, all positions within the feasible region $\mathcal{C}$ occur with equal probability. An ML-based estimator cannot differentiate these positions and therefore may return any position within this feasible region as the position estimate. In comparison, our algorithm will only deliver the best position estimate that minimizes the worst-case estimation error, i.e. the one resembling the Chebyshev center. Therefore when the measurement errors are uniformly distributed, the proposed algorithm delivers better performance than an ML-based estimator. When the measurement errors are Gaussian distributed, all positions within the feasible region occur with different probabilities. In this situation, the accurate knowledge of the measurement error distribution, which forms the basis of E-ML estimator, can be exploited to deliver better performance than the proposed scheme, which does not rely on such knowledge.

 \begin{figure}[!ht]
  \centering
 \includegraphics[width=0.35\textwidth]{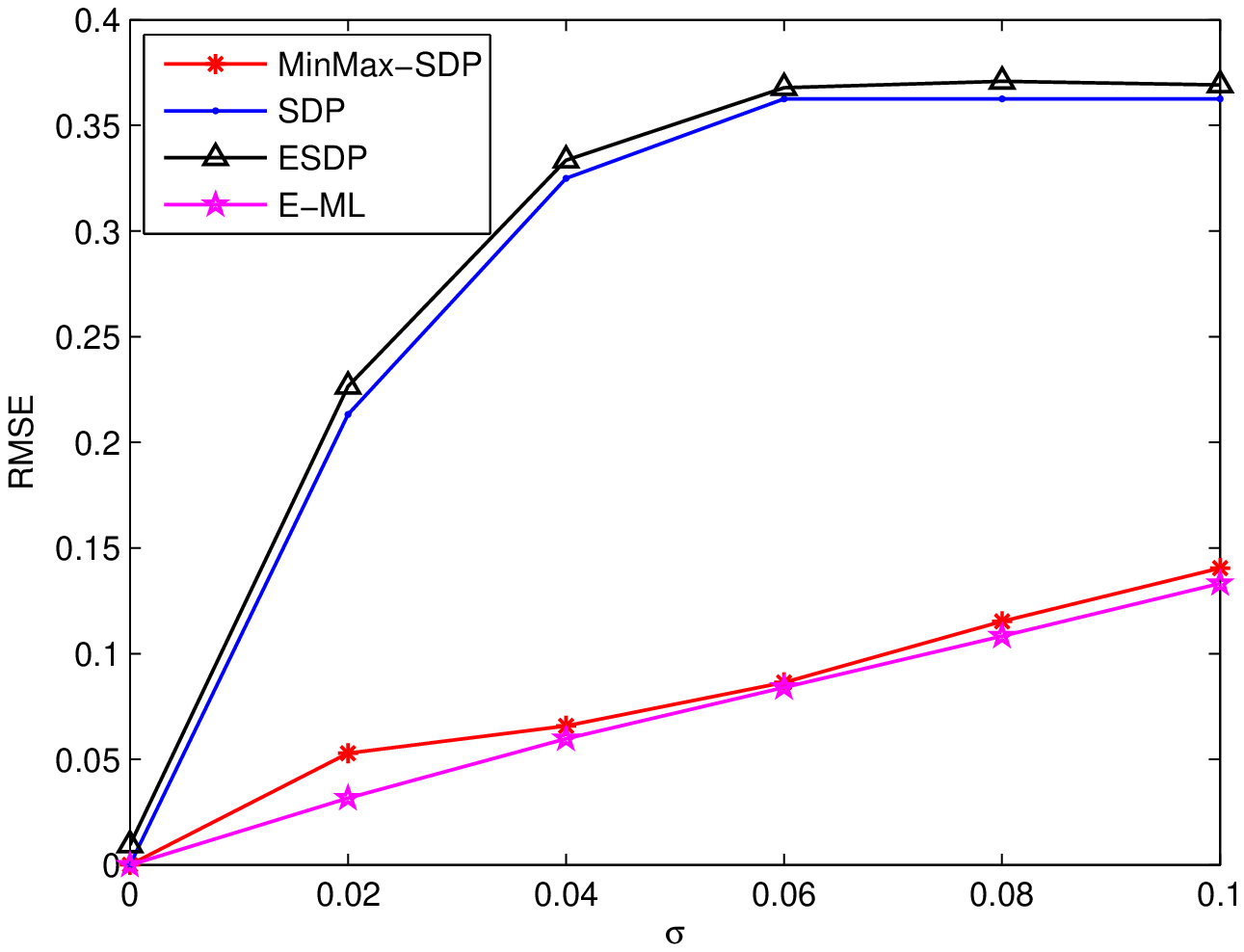}
 \vspace{-15pt}
 \caption{RMSEs under Gaussian distributed measurement errors with different standard deviations in 50 trials.}
\label{fig:RMSE_Gauss}
\vspace{-20pt}
 \end{figure}

 \subsubsection{Mixture measurement error model}
In practice, the measurement errors may not perfectly fit a statistical distribution, among which, usually there exist some outliers. In the simulation, the inliers follow a zero-mean Gaussian distribution with standard deviation $\sigma$, and the outliers are uniformly distributed in $[-3\sigma, 3\sigma]$.  Fig. \ref{fig:MixNoise} compares the performance of MinMax-SDP algorithm with E-ML. In MinMax-SDP, the error bound $\gamma=3\sigma$. In E-ML, the measurement errors are taken as Gaussian distributed with zero-mean and standard deviation $\sigma$. The ratio denotes the ratio between the number of uniformly distributed errors and the number of Gaussian distributed errors. We can see, when number of the outliers is small, i.e., $ratio=0.1$ in our figure, E-ML is better than MinMax-SDP when $\sigma<0.08$, but worse than MinMax-SDP when $\sigma>0.08$. When the ratio increases to $0.5$, we can see the performance of MinMax-SDP is slightly changed, but the performance of E-ML is much worse than that when $ratio=0.1$. We can conclude that MinMax-SDP is more robust to the changes of the measurement error distributions than E-ML.

 \begin{figure}[!ht]
 \centering
 \includegraphics[width=0.35\textwidth]{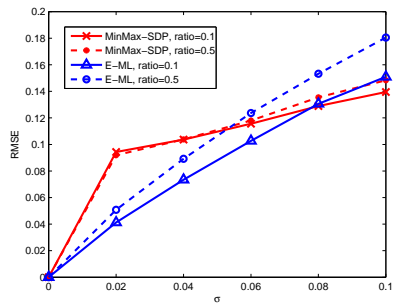}
 \vspace{-15pt}
 \caption{Localization results with mixture measurement error model, i.e., Gaussian distributed errors as inliers and uniformly distributed errors as outliers.}
\label{fig:MixNoise}
\vspace{-20pt}
 \end{figure}

 \subsubsection{Experimental evaluation}
In this part, we use experimental data provided in \cite{patwari2003TSP} to test the performance of MinMax-SDP. In the experiment, there are four devices placed near the corners in the area and 40 devices to be localized. The measurements utilized for localization are TOA measurements, of which, the measurement errors are zero-mean Gaussian distributed with standard deviation $\sigma_T=6.1ns$. Since the signal transmission speed $v$ is known, the range measurements can be easily obtained. In MinMax-SDP, the range measurement error bound is $3\sigma_Tv$. We assume that the sensing range of each device is $R=5m$, and build up the corresponding connectivity matrix. Only the measurements between the connected devices are used for localization. The localization results of MinMax-SDP and E-ML are shown in Fig. \ref{subfig:experimentMinMaxC} and Fig. \ref{subfig:experimentEML} respectively. Fig. \ref{subfig:experimentCDF} compares the CDF of the position estimation errors of MinMax-SDP and E-ML. We can see the performance of MinMax-SDP, which only uses the measurement error bound, is comparable with that of E-ML, which takes advantage of the statistical distribution of the measurement errors.

 \begin{figure}[!ht]
  \centering
\subfigure[MinMax-SDP]{
 \begin{minipage}[t]{0.34\linewidth}
 \includegraphics[width=\textwidth]{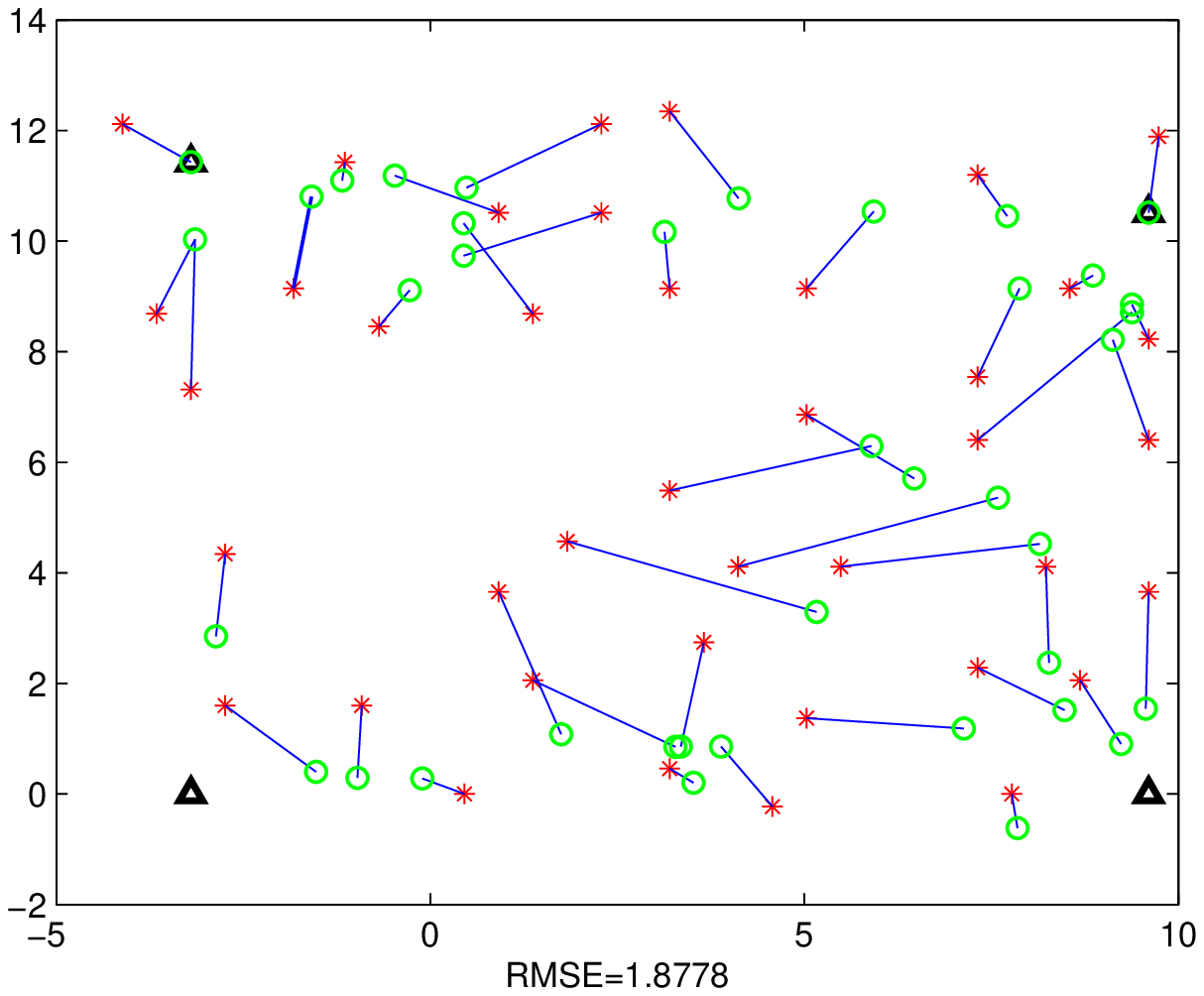}
 \label{subfig:experimentMinMaxC}
 \end{minipage}}
 \hspace{-20pt}
\subfigure[E-ML]{
 \begin{minipage}[t]{0.34\linewidth}
 \includegraphics[width=\textwidth]{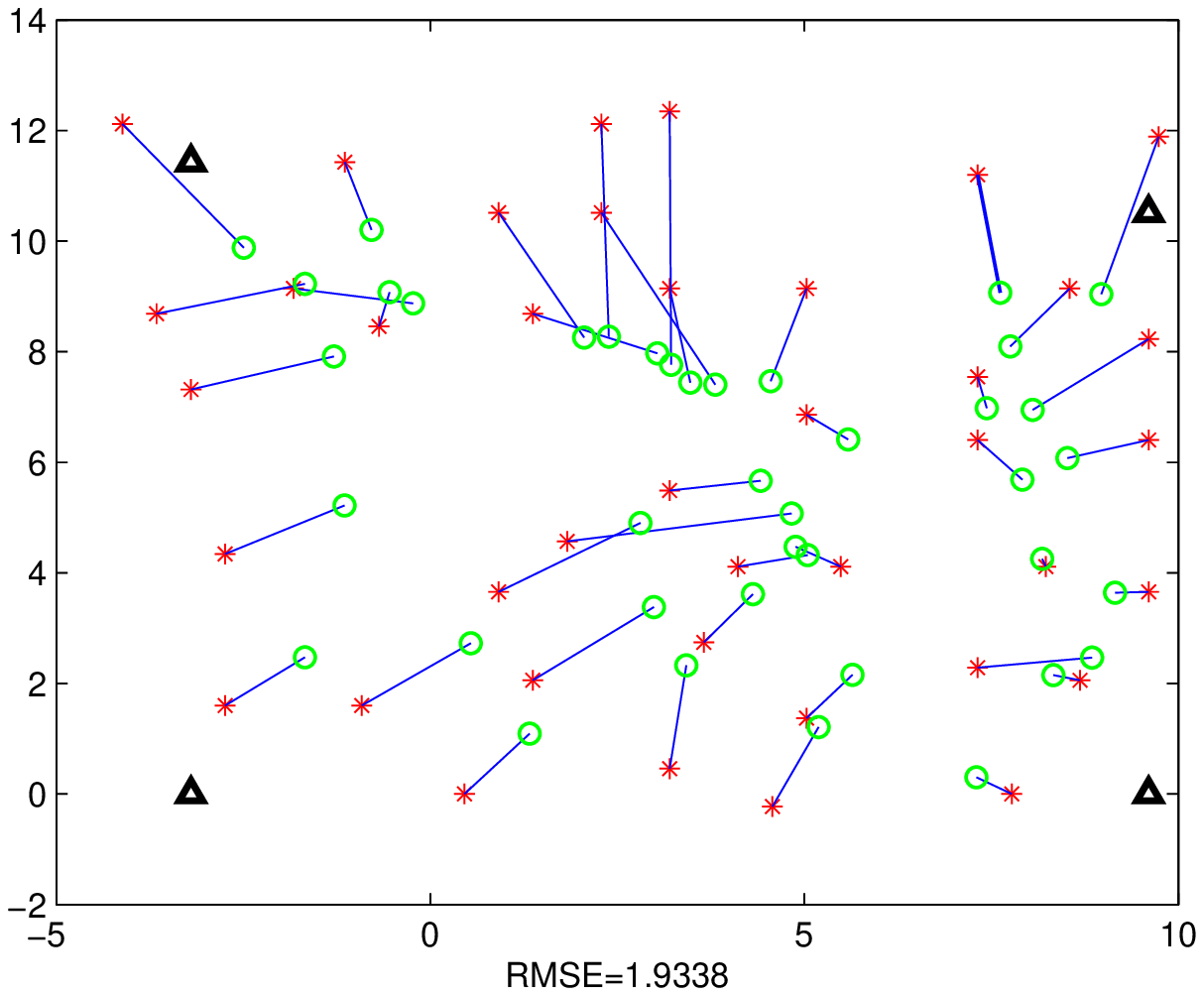}
  \label{subfig:experimentEML}
 \end{minipage}}
 \hspace{-20pt}
 \subfigure[CDF]{
 \begin{minipage}[t]{0.34\linewidth}
 \includegraphics[width=\textwidth]{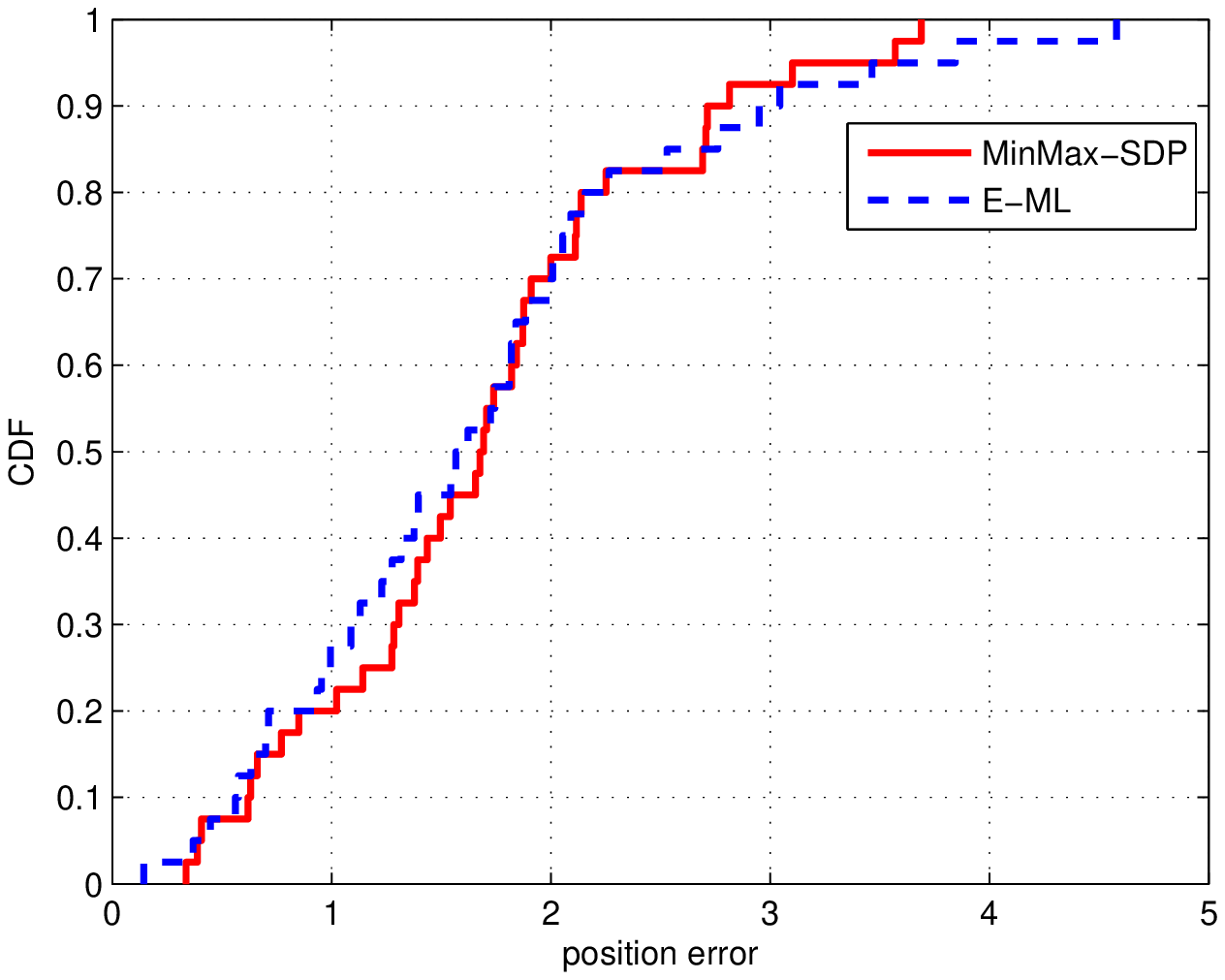}
  \label{subfig:experimentCDF}
 \end{minipage}}
 \caption{Experimental localization results under Gaussian distributed measurement errors, where the triangles denote the anchors, the stars denote the true positions of the sensors and the circles denote the estimated positions.}
\label{fig:Experimental results}
\vspace{-20pt}
 \end{figure}

\subsection{Performance of Distributed Algorithm}
 \begin{figure}[!ht]
  \centering
\subfigure[$n=50$, $R=0.5$]{
 \begin{minipage}[t]{0.33\linewidth}
 \includegraphics[width=\textwidth]{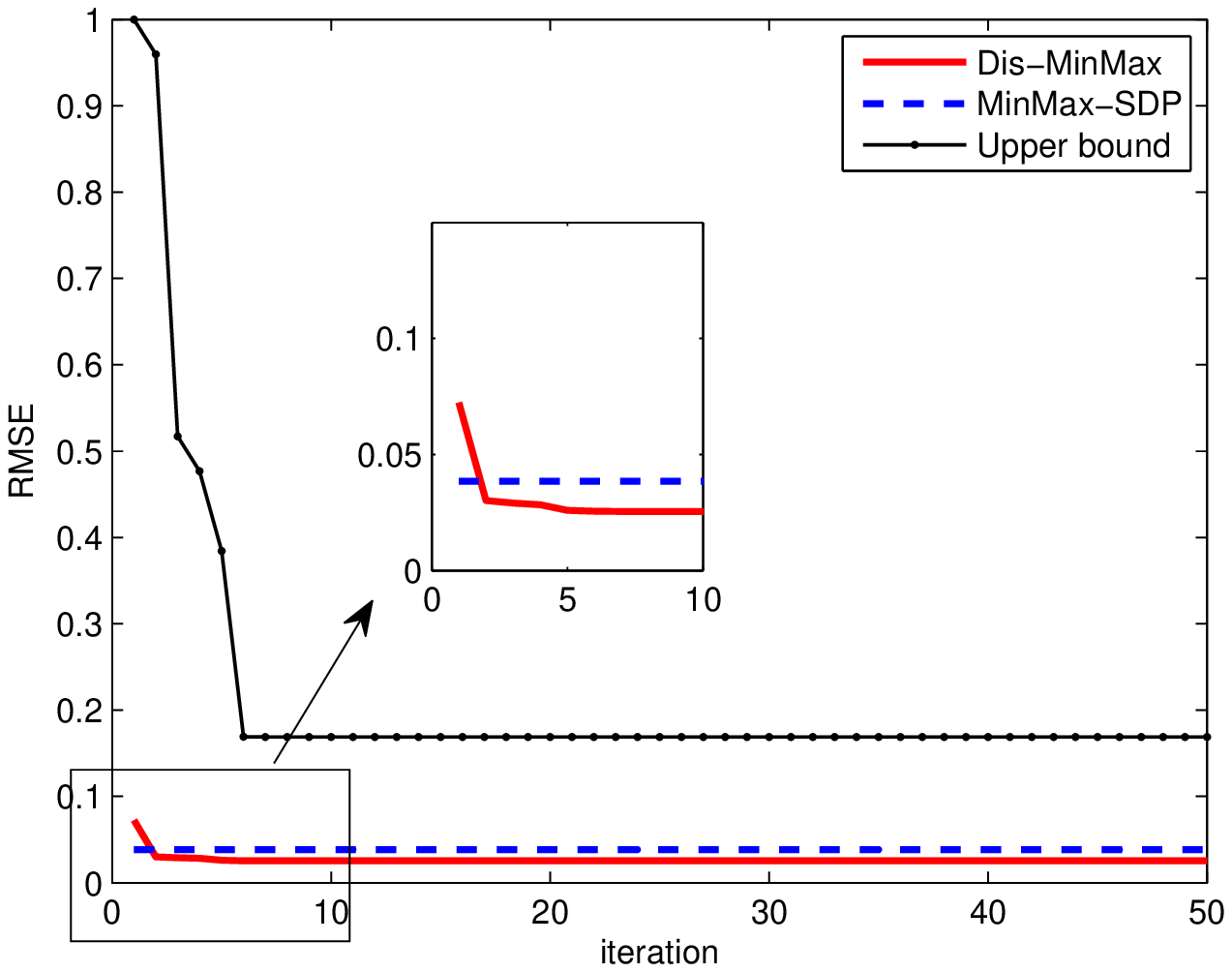}
 \label{subfig:distributedRMSE_N50}
 \end{minipage}}
  \hspace{-15pt}
\subfigure[$n=100$, $R=0.3$]{
 \begin{minipage}[t]{0.33\linewidth}
 \includegraphics[width=\textwidth]{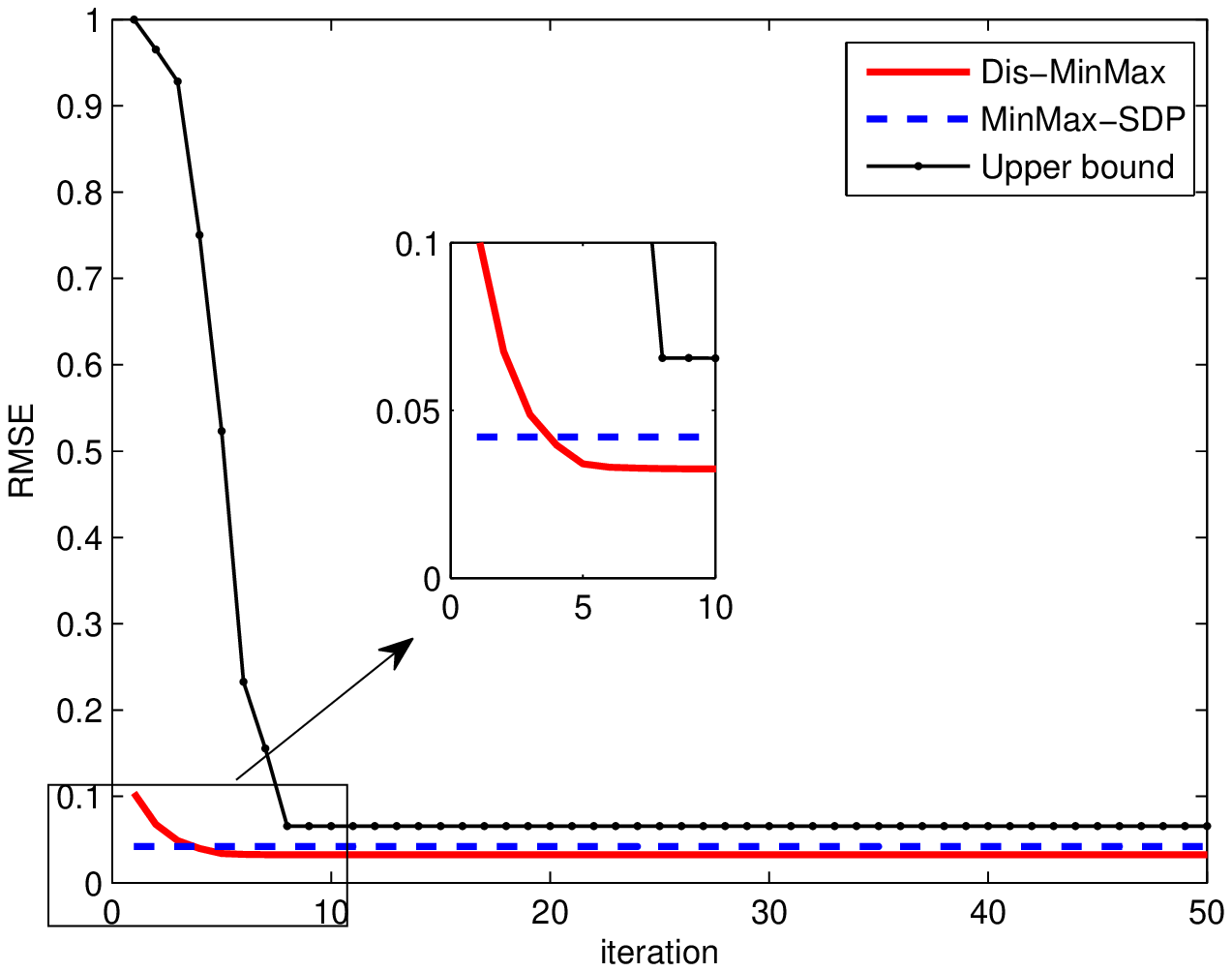}
  \label{subfig:distributedRMSE_N100}
 \end{minipage}}
   \hspace{-15pt}
 \subfigure[$n=200$, $R=0.2$]{
  \begin{minipage}[t]{0.33\linewidth}
 \includegraphics[width=\textwidth]{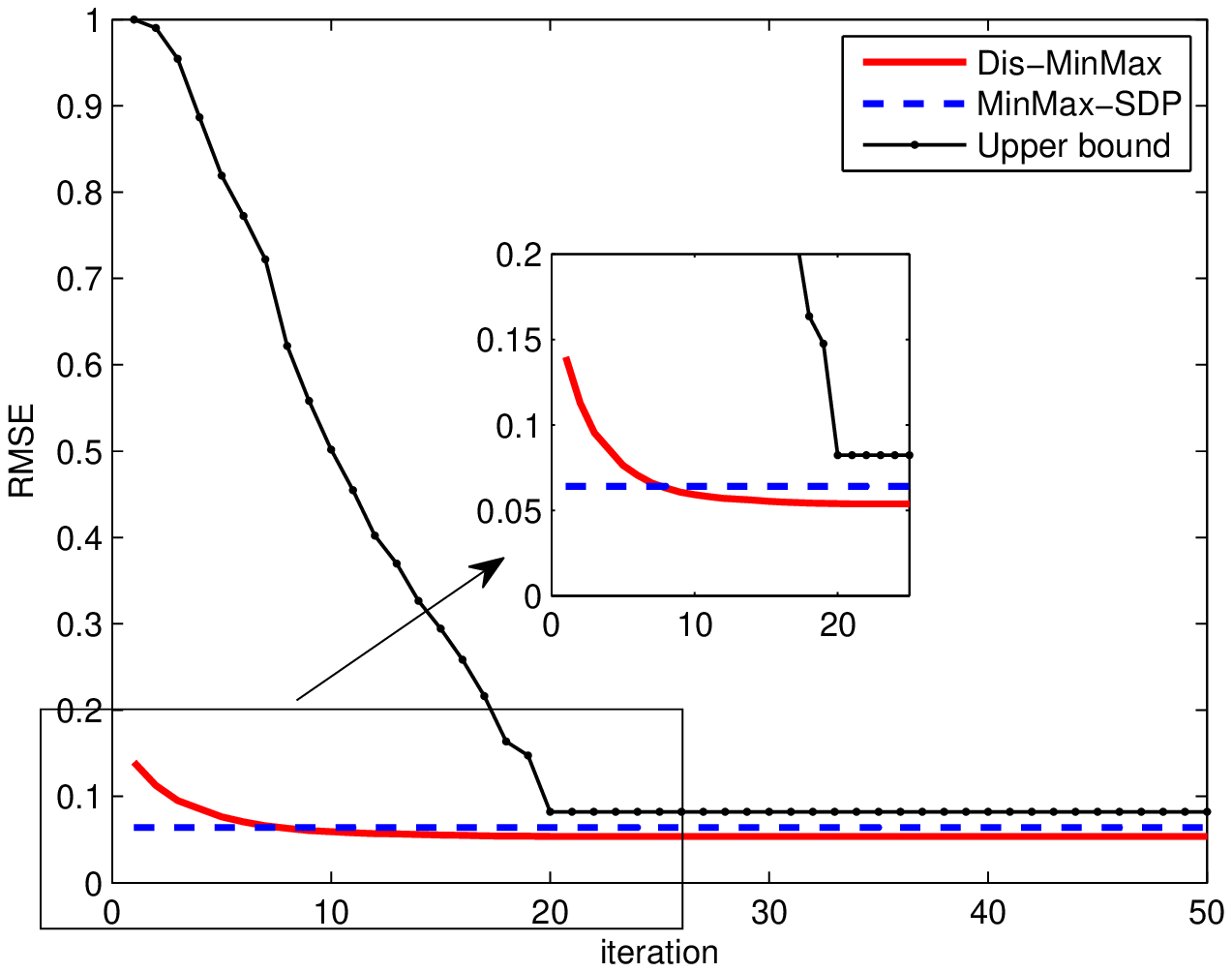}
  \label{subfig:distributedRMSE_N200}
 \end{minipage}}
 \caption{Localization performance of Dis-MinMax under Gaussian distributed measurement errors}
\label{fig:distributedPerformance}
\vspace{-20pt}
 \end{figure}

In the simulations about the performance of Dis-MinMax, we let four anchors be placed at $(-0.5,-0.5)$, $(0.5,-0.5)$, $(-0.5,0.5)$, $(0.5,0.5)$, and the sensors with unknown positions be randomly deployed using a uniform distribution in a unit square area. Three scenarios are considered: $a)$ $n=50$ and $R=0.5$; $b)$ $n=100$ and $R=0.3$; $c)$ $n=200$ and $R=0.2$. The measurement errors follow a zero-mean Gaussian distribution, with standard deviation as $\sigma=0.02$. The error bound is set as $\gamma=3\sigma$. Fig.~\ref{subfig:distributedRMSE_N50} compares the RMSEs of the position estimates obtained by MinMax-SDP and Dis-MinMax. We can see the RMSEs obtained by Dis-MinMax converge very quickly and the converged value is slightly smaller than that obtained by the centralized algorithm. Though MinMax-SDP and Dis-MinMax are both minimizing the worst-case estimation error, they are tackled in different ways, which results in the difference between the converged value obtained by the distributed algorithm and the RMSEs obtained by the centralized algorithm. Fig.~\ref{subfig:distributedRMSE_N50} also illustrates the iteration process of the upper bound for the RMSE, which is defined as $\sqrt{\frac{\sum_{i=1}^{n}R_i(\tau)^2}{n}}$. From Fig.~\ref{subfig:distributedRMSE_N50}, we can see that the upper bound for the RMSE also quickly converges. The localization results when the number of sensors increases to 100 and 200 are shown in Fig.\ref{subfig:distributedRMSE_N100} and Fig.\ref{subfig:distributedRMSE_N200}, from which, we can observe similar performance with that shown in Fig.~\ref{subfig:distributedRMSE_N50}. Therefore, Dis-MinMax is scalable to a larger number of nodes.

\begin{table}[!h]
\tabcolsep 1.2mm
\renewcommand{\arraystretch}{1.4}
\caption{Analytical comparison of different distributed algorithms} \label{table:analyticalCompare}
\vspace{-10pt}
\begin{center}
\begin{tabular}{|c|c|c|c|c|c|}
\hline
\tabincell{c}{} & \tabincell{c}{Dis-MinMax } & \tabincell{c}{E-ML with ADMM} & \tabincell{c}{ECHO}& \tabincell{c}{SNLRS}\\
\hline
       Size of convex problem &      $2|\mathcal{N}_{i}|+2$ & $7|\mathcal{N}_{ij}|+2|\mathcal{N}_{ia}|+3$    & not applicable&  not applicable\\
\hline
       Computational complexity &      $\mathcal{{O}}(|\mathcal{N}_{i}|^3)$ &      $\mathcal{{O}}(|\mathcal{N}_{i}|^3)$                                 &$\mathcal{{O}}(|\mathcal{N}_{i}|)$ &$\mathcal{{O}}(|\mathcal{N}_{i}|)$\\
\hline
       Communication cost &      $2|\mathcal{N}_{ij}|$ &      $9|\mathcal{N}_{ij}|$ &   $2|\mathcal{N}_{ij}|$&   $|\mathcal{N}_{ij}|$\\
\hline
\end{tabular}
\end{center}
\vspace{-20pt}
\end{table}

We then compare the localization performance of Dis-MinMax with other exiting distributed algorithms:  $a)$ E-ML with ADMM in \cite{MLE2014TSP}, which is a distributed implementation of aforementioned E-ML and can converge with a sublinear rate; $b)$ ECHO in \cite{diao2014barycentric}, which uses barycentric coordinates to express the positions in a linear form and can converge to the true positions in error-free case; $c)$ SNLRS in \cite{ICASSP2015}, which divides the entire network into several overlapping subnetworks, in which the sensors are localized via SDP, respectively. The global coordinates of sensor position estimates in each subnetwork are obtained through rigid registration. SNLRS is a variant of distributed SDP. Table \ref{table:analyticalCompare} compares the computational complexities and communication costs \textbf{per iteration} of different distributed algorithms. In this table, $\mid\mathcal{N}_i\mid$ denotes the number of sensor $i$'s neighboring nodes, $|\mathcal{N}_{ij}|$ denotes the number of sensors among sensor $i$'s neighboring nodes, and $|\mathcal{N}_{ia}|$ denotes the number of anchors among sensor $i$'s neighboring nodes. We can find the computational complexity of ECHO is smaller than that of Dis-MinMax and E-ML with ADMM. The computational complexities of Dis-MinMax and E-ML with ADMM are of the same order. The communication costs of Dis-MinMax and ECHO are equal and smaller than that of E-ML with ADMM. It should be noted that in SNLRS, the sensors in each subnetwork are actually locally localized in a centralized way. With regard to SNLRS, table \ref{table:analyticalCompare} only shows the computational complexities and communication costs \textbf{per iteration} during the position refinement using gradient-based search after the global rigid registration. Fig.~\ref{fig:RMSE_compare} compares the localization results of these distributed algorithms. In Fig.~\ref{fig:RMSE_compare}, both Dis-MinMax and E-ML with ADMM can converge very fast, SNLRS converges in more than $100$ iterations and ECHO converges in more than $10^5$ iterations. Moreover, the converged value of RMSE obtained by Dis-MinMax is much lower than those obtained by other three algorithms. We can conclude that the performance of Dis-MinMax is the best among these four algorithms.
  \begin{figure}[!ht]
  \centering
 \includegraphics[width=0.35\textwidth]{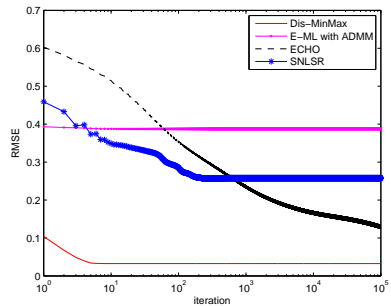}
  \vspace{-15pt}
 \caption{Comparison of RMSEs obtained by Dis-MinMax, E-ML with ADMM, ECHO and SNLRS under Gaussian distributed measurement errors, where $\sigma=0.02$, $n=100$ and $R=0.3$.}
\label{fig:RMSE_compare}
\vspace{-20pt}
 \end{figure}

 \section{Conclusions}\label{sec:conclusion}
In this paper, we investigate a network localization problem with unknown and bounded measurement errors. We formulate this problem as a non-convex optimization problem to minimize the worst-case localization error. Through relaxation, we transform the non-convex optimization problem into a convex optimization problem, whose dual problem can be solved through semidefinite programming. We give a geometrical interpretation of our problem and prove that the localization error of our proposed algorithm is upper bounded. Furthermore, we propose a distributed algorithm, along with an initial estimation algorithm. The convergence of the distributed algorithm is also proved. Extensive simulations show that both the centralized MinMax-SDP and Dis-MinMax can perform very well without the statistical knowledge of measurement errors.

In this paper, we only consider the localization problem with bounded range measurements. One interesting extension of this work would be studying the localization problem with other forms of measurements, e.g., AOA, RSS, TDOA, etc., or hybrid measurements with unknown and bounded measurement errors.


\end{document}